\DeclareMathOperator{\gdc}{gdc}
\DeclareMathOperator{\Id}{Id}
\DeclareMathOperator{\ud}{d}
\newcommand{\Z}{\mathbb{Z}}
\newcommand{\N}{\mathbb{N}}
\newcommand{\dP}{\mathrm{d}P}
\newcommand{\lequiv}{\stackrel{\text{t.d.}}{\equiv}}
\renewcommand{\vec}[1]{\mathbf{#1}}
\renewcommand{\tilde}{\widetilde}
\newcommand\lagrangeprime[1]{^{%
\ifcase#1 \or\prime\or\prime\prime\or\prime\prime\prime\else\mathrm{\romannumeral #1}\fi}}
\numberwithin{equation}{section}
\newcounter{eqlist}
\newcounter{eqlistred}
\renewcommand{\pdv}[2]{\begingroup 
\@tempswafalse\toks@={}\count@=\z@ 
\@for\next:=#2\do 
{\expandafter\check@var\next\@nil
 \advance\count@\der@exp 
 \if@tempswa 
   \toks@=\expandafter{\the\toks@\,}% 
 \else 
   \@tempswatrue 
 \fi 
 \toks@=\expandafter{\the\expandafter\toks@\expandafter\partial\der@var}}% 
\frac{\partial\ifnum\count@=\@ne\else^{\number\count@}\fi#1}{\the\toks@}% 
\endgroup} 
\def\check@var{\@ifstar{\mult@var}{\one@var}} 
\def\mult@var#1#2\@nil{\def\der@var{#2^{#1}}\def\der@exp{#1}} 
\def\one@var#1\@nil{\def\der@var{#1}\chardef\der@exp\@ne} 
\theoremstyle{plain}
\newtheorem{theorem}{Theorem}[section]
\newtheorem{proposition}[theorem]{Proposition}
\newtheorem{corollary}[theorem]{Corollary}
\newtheorem{lemma}[theorem]{Lemma}
\theoremstyle{definition}
\newtheorem{definition}{Definition}[section]
\theoremstyle{remark}
\newtheorem{remark}{Remark}[section]
\theoremstyle{remark}
\theoremstyle{definition}
\newtheorem*{problem*}{Problem}
\begin{document}
\title{On the inverse problem of the discrete calculus of variations}

\author{Giorgio Gubbiotti}
\address{School  of  Mathematics  and  Statistics  F07,  The  University  of  Sydney,  NSW  2006, Australia}
\email{giorgio.gubbiotti@sydney.edu.au}
\subjclass[2010]{70S05; 39A99}

\date{\today}

\begin{abstract}
    In this paper we present an algorithm to find
    the discrete Lagrangian for an autonomous recurrence relation of
    arbitrary even order $2k$ with $k>1$.
    The method is based on the existence of a set of differential
    operators called annihilation operators which can be used to
    convert a functional equation into a system of linear partial
    differential equations.
    This completely solves the inverse problem of the calculus 
    of variations in this setting.
\end{abstract}

\maketitle

\section{Introduction}

One of the most powerful tools in Mathematical Physics
since the times of Euler and Lagrange is the  \emph{calculus of variations}.
The variational formulation of mechanics where the equations
of motion arise as the minimum of an \emph{action functional}
(the so-called \emph{Hamilton's principle}), is fundamental in
the developing of theoretical mechanics and its foundations are
present in each textbook on this subject \cite{Gold2002,LandauMech,Whittaker}.
Beside this, the application of calculus of variations goes beyond
mechanics as many important mathematical problems, e.g. the isoperimetrical
problem and the catenary, can be formulated in terms of calculus of
variations.
For a more complete outlook on the calculus of variations, its scopes
and its applications we refer to standard textbooks on the
subject \cite{Dacorogna2015book,Forsyth1960book,GelfandFomin1963book,JostLiJost1998book,Sagan1992book}.

An important problem regarding the calculus of variations
is to determine when a system of differential equations are the 
Euler--Lagrange equations for some variational problem.
This problem is called \emph{the inverse problem of the calculus of variations}
and has a long and interesting history.
It was was first addressed by Jacobi in the case of second-order 
scalar ordinary differential equations
\cite{Jacobi44a,Jacobi45,Jacobi1846,JacobiVD}. 
In this case it turns out that the answer is that such kind
of equations admit infinitely many inequivalent Lagrangians.
Jacobi's proof use the so-called Jacobi Last Multiplier, which can be explicitly 
used to construct Lagrangians.
This proof can be found in \cite{Whittaker}.
The same result was found also by different authors after Jacobi, 
usually with no explicit mention of his work,
see \cite{Darboux1915,Helmholtz1887}.
The general case of this problem remains unsolved, 
whereas several important results for particular cases where
presented during the XX century.
In particular we mention that, under certain restrictions,
the general case can be addressed by a method developed by the
Italian mathematician V. Volterra \cite{Volterra1916}.
Such method can be understood in a modern formalism through
the theory of variational complexes and homotopy \cite{Vainberg1964}.
Volterra's result is therefore now known as \emph{the homotopy formula}.
A complete solution in the case of the systems of two 
second-order ordinary differential equations was given by
Douglas in 1941 \cite{Douglas1941}.
It is worth to note that Douglas' solution have been interpreted
in terms of geometrical calculus in \cite{Crampinetal1994,SarletThompsonPrince2002}
using the important advances on the subject given in \cite{AndersonThompson1992}.
In the scalar case necessary and sufficient conditions on 
the existence of a Lagrangian are known up to eight-order equations.
Indeed in \cite{Fels1996} such conditions were derived in the case 
of fourth-order equations, whereas in \cite{Juras2001} a set of conditions
for the existence of a Lagrangian for sixth- and eight-order equations
was given.
For systems of $n$ second-order differential equations
some results on \emph{radially symmetric systems} exist 
\cite{HenneauxShepley1982,Caviglia1985}.
For a complete, yet accessible historical introduction on the subject we refer 
to \cite[Chap. 4]{Olver1986}.
For a more updated exposition on the results obtained on the subject
in the last 30 years we refer to the reviews \cite{Saunders2010,Vitolo2007book}
and references therein.

We mention that the relation of the Jacobi Last Multiplier 
with the inverse problem of the calculus of variations was
often neglected in literature. The Jacobi Last Multiplier itself 
was rediscovered several times by researchers unaware of its properties.
For an historical perspective on this subject we refer to 
\cite{jlm05,nuctam_2lag,laggal} and references therein.
Moreover, we observe that the Jacobi Last Multiplier can be
used to find Lagrangians for differential equations of
order higher than two \cite{NucciArthurs2010} and for
non-dissipative systems of second-order equations 
\cite{Rao1940,CP07Rao1JMP}.

In this paper we give some condition 
on the existence of a Lagrangian in the discrete scalar setting.
To be more precise, we will reduce the condition of existence
of a Lagrangian for a scalar difference equation of arbitrary
even order $2k$, with $k>1$ to the solution of an overdetermined 
system of linear partial differential equations.
The fact that the system is linear and overdetermined, means that
in general it possible to solve it without an excessive effort.
Our solution to the problem is constructive:
solving such system one can either find the Lagrangian or conclude
that it does not exist.
This result pave the way to various applications, for instance in
the classification of difference equations based on their variational
structure, or their integrability properties.
This approach has been already applied to understand how integrability
arises in a family of fourth-order maps with two given invariants
\cite{GJTV_class,GJTV_sanya}.

The inverse problem of discrete calculus of variations 
was considered from the point of view of variational
complexes and homotopy in \cite{HydonMansfield2004}.
In such paper a result analogous to the homotopy formula for the 
continuous case was proved.
We mention that an early description of the difference variational
complex appeared in \cite{Kupershmidt1985book}.

The plan of the paper is following: in Section \ref{sec:dlagr} we introduce
the basis of the discrete calculus of variations for scalar difference
equations. 
We give a self-contained account of all the fundamental results 
we are going to use based on the original papers on the subject
\cite{Bruschietal1991,Logan1973,TranvanderKampQusipel2016,Veselov1991}.
In Section \ref{sec:method} we present our main result which is
given by Theorem \ref{thm:elexist}.
This theorem, along with its Corollary \ref{cor:nolagr}, allows
to prove or disprove the existence of a Lagrangian in our setting.
Section \ref{sec:examples} is then devoted to examples.
We present several examples from the recent literature, 
even of equations of arbitrarily high order.
Finally in Section \ref{sec:concl} we give some conclusions
and outlook.
In particular we discuss the possibility of using the method presented 
in section \ref{sec:method} in the framework of the 
\emph{geometric integration theory} 
\cite{BuddIserles1999,BuddPiggot2003,KrantzParks2008},
and we present a comparison with the homotopy approach given in
\cite{HydonMansfield2004}.

\section{Discrete Lagrangians and their basic properties}
\label{sec:dlagr}

In this section we introduce discrete Lagrangians and those of 
their properties we are going to use throughout the paper.
Some of these properties were discussed already in \cite{Logan1973}.
Later accounts on these properties of discrete Lagrangians
can be found in \cite{Veselov1991,Bruschietal1991} and more recently
in \cite{TranvanderKampQusipel2016}.
We refer to these papers and references therein for a more
complete overview on the subject.

Throughout this paper we are going to consider consider 
recurrence relation of even order $2k$, equivalently
called a scalar difference equation, i.e. functional equations
of the following from:
\begin{equation}
    x_{n+k} = F\left( x_{n+k-1},x_{n+k-2},\dots,x_{n-k} \right).
    \label{eq:recgen}
\end{equation}
where $k\geq1$.
To be a well-posed equation of order $2k$ we impose the 
\emph{non-degeneracy condition}:
\begin{equation}
    \pdv{F}{x_{n-k}}\neq 0.
    \label{eq:nondeg}
\end{equation}
Throughout the paper we will \emph{always} consider this condition to be
satisfied.

A \emph{discrete action} of order $k$ is a linear functional of the form:
\begin{equation}
    S\left[ x_{n} \right] = \sum_{n\in\Z} L\left( x_{n+k},x_{n+k-1},\dots,x_{n} \right).
    \label{eq:daction}
\end{equation}
The summand function 
\begin{equation}
    L = L\left( x_{n+k},x_{n+k-1},\dots,x_{n} \right)
    \label{eq:lagrgen}
\end{equation}
is called a \emph{discrete Lagrangian}.
%A discrete Lagrangian can, in principle depend on the independent
%variable $n$.
%Throughout this paper, to simplify the notation, we will usually
%omit the pedix $n$, 

Usually we think difference equation \eqref{eq:recgen}
and discrete Lagrangians \eqref{eq:lagrgen} to be autonomous,
i.e. not explicitly dependent on $n$.
For this reason in the general fomul\ae\ \eqref{eq:recgen} 
and \eqref{eq:lagrgen} we omitted the index $n$ in the hand side.
However, we underline that all the reasoning presented 
in this paper also work in the non-autonomous case, 
with the appropriate care. 
In particular we will underline with appropriate remarks when a result 
can be simplified in the autonomous setting.
On the other side, when a result requires to be be discussed in 
the non-autonomous setting we will always place the subscript $n$
and explain why it is needed.
%To this end see Subsection \ref{sub:dPI2}.

From the variation of the discrete action \eqref{eq:daction}
we obtain that the stationary points of the discrete action 
satisfy the following difference equation \cite{Logan1973}:
\begin{equation}
    \sum_{l=0}^{k} T^{-l}_{n}\pdv{L}{x_{n+l}}\left( x_{n+k},x_{n+k-1},\dots,x_{n} \right) = 0.
    \label{eq:elgen}
\end{equation}
In formula \eqref{eq:elgen} $T_{n}$ is the translation operator
acting on any function $f_{n}=f_{n}\left( x_{n+l},\dots,x_{n+m} \right)$
depending on a finite number of shifts of $x_{n}$ and possibly on
the independent variable $n$ as:
\begin{equation}
    T_{n} f_{n}\left( x_{n+l},\dots,x_{n+m} \right) = 
    f_{n+1}\left( x_{n+l+1},\dots,x_{n+m+1} \right).
    \label{eq:traslop}
\end{equation}

Equation \eqref{eq:elgen} is called the discrete Euler--Lagrange equation 
corresponding to the discrete Lagrangian $L$ \eqref{eq:lagrgen}.
When no ambiguity is possible, we will simply address equation
\eqref{eq:elgen} as the discrete Euler--Lagrange equation.
The left hand side of the discrete Euler--Lagrange equations \eqref{eq:elgen}
is sometimes called the \emph{variational derivative} of the
action \eqref{eq:daction} and denoted by $\delta S/\delta x_{n}$.

We say that a discrete Lagrangian \eqref{eq:lagrgen} is a 
discrete Lagrangian  for the difference equation \eqref{eq:recgen} 
if its discrete Euler--Lagrange equations \eqref{eq:elgen} coincide 
with \eqref{eq:recgen}.

\begin{remark}
    The expression of the Euler--Lagrange equation given in
    equation \eqref{eq:elgen} is valid both for autonomous and
    non-autonomous Lagrangians.
    In particular in the case of autonomous Lagrangians 
    $L=L\left( x_{n+k},\dots,x_{n} \right)$ its expression can be simplified
    to:
    \begin{equation}
        \sum_{l=0}^{k} \pdv{L}{x_{n}}\left( x_{n+k-l},x_{n+k-1},\dots,x_{n-l} \right) = 0.
        \label{eq:elgenaut}
    \end{equation}
    On the other hand if the Lagrangian depends explicitly on $n$,
    $L=L_{n}$ we can write formula \eqref{eq:elgen} as:
    \begin{equation}
        \sum_{l=0}^{k} \pdv{L_{n-l}}{x_{n}}\left( x_{n+k-l},x_{n+k-1},\dots,x_{n-l} \right) = 0.
        \label{eq:elgennonaut}
    \end{equation}
    \label{eq:elautnonaut}
\end{remark}

When developing the theory of discrete Lagrangians it is good to consider 
both expressions \eqref{eq:elgen} and \eqref{eq:elgennonaut} as the former
is more abstract and useful for theoretical purposes, while the latter is
more useful in explicit computations.

In particular formula \eqref{eq:elgen} allow us to think of the variational 
derivative as a linear operator acting on functions of the form \eqref{eq:lagrgen}.
That is, we can state the following definition:
\begin{definition}
    The linear differential-difference operator:
    \begin{equation}
        \mathcal{E} = 
        \sum_{l=0}^{k} T^{-l}_{n}\pdv{}{x_{n+l}}.
        \label{eq:EulerOperator}
    \end{equation}
    is called the \emph{Euler operator}.
    \label{def:euler}
\end{definition}

The Euler--Lagrange equation is written in terms of the Euler operator
\eqref{eq:EulerOperator} as $\mathcal{E}(L)=0$.
The following important result about the kernel of the Euler operator 
\eqref{eq:EulerOperator} holds:
\begin{lemma}
    The kernel of the Euler operator \eqref{eq:EulerOperator} is the
    space of \emph{total differences}, i.e. of functions 
    $g_{n} = g_{n}\left(x_{n+k},\dots,x_{n}\right)$ such that there exists
    a function $f_{n}=f_{n}\left(x_{n+k-1},\dots,x_{n}\right)$ such that:
    \begin{equation}
        g_{n}\left(x_{n+k},\dots,x_{n}\right) = 
        (T_{n}-\Id)f_{n}\left( x_{n+k-1},\dots,x_{n} \right),
        \label{eq:totdiff}
    \end{equation}
    that is, the image of the operator $T_{n}-\Id$.
    Symbolically we write:
    \begin{equation}
        \ker \mathcal{E} = \Im\left( T_{n}-\Id \right).
        \label{eq:kerE}
    \end{equation}
    \label{lem:zeroder}
\end{lemma}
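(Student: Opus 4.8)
The plan is to establish the two inclusions $\Im\left(T_{n}-\Id\right)\subseteq\ker\mathcal{E}$ and $\ker\mathcal{E}\subseteq\Im\left(T_{n}-\Id\right)$ separately. The first is a direct computation, whereas the second is the genuine content of the lemma and I would prove it by induction on the order, peeling off one total difference at a time.

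For the inclusion $\Im\left(T_{n}-\Id\right)\subseteq\ker\mathcal{E}$, the key observation is the commutation rule
\[
    \frac{\partial}{\partial x_{n+l}}\circ T_{n} = T_{n}\circ\frac{\partial}{\partial x_{n+l-1}},
\]
which follows at once from the definition \eqref{eq:traslop} of $T_{n}$. Applying $\mathcal{E}$ to $g_{n}=\left(T_{n}-\Id\right)f_{n}$ and using this rule term by term, I would rewrite $\mathcal{E}(T_{n}f_{n})$ as a reindexed sum $\sum_{m=-1}^{k-1}T_{n}^{-m}\partial f_{n}/\partial x_{n+m}$; since $f_{n}$ depends only on $x_{n+k-1},\dots,x_{n}$, the boundary contributions $\partial f_{n}/\partial x_{n-1}$ and $\partial f_{n}/\partial x_{n+k}$ vanish, and the surviving terms collapse exactly onto $\mathcal{E}(f_{n})$. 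Hence $\mathcal{E}(T_{n}f_{n})=\mathcal{E}(f_{n})$ and $\mathcal{E}(g_{n})=0$. This step is routine.

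The substantial inclusion is $\ker\mathcal{E}\subseteq\Im\left(T_{n}-\Id\right)$. Given $g_{n}=g_{n}\left(x_{n+k},\dots,x_{n}\right)$ with $\mathcal{E}(g_{n})=0$, I would first extract a discrete Helmholtz-type condition. The variable $x_{n+k}$ enters the sum $\sum_{l=0}^{k}T_{n}^{-l}\partial g_{n}/\partial x_{n+l}$ only through the $l=0$ term, so differentiating $\mathcal{E}(g_{n})=0$ with respect to $x_{n+k}$ yields
\[
    \frac{\partial^{2} g_{n}}{\partial x_{n}\,\partial x_{n+k}}=0 .
\]
This forces a splitting $g_{n}=A\left(x_{n+k},\dots,x_{n+1}\right)+B\left(x_{n+k-1},\dots,x_{n}\right)$ with $A$ independent of $x_{n}$ and $B$ independent of $x_{n+k}$. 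Setting $\phi_{n}:=A\left(x_{n+k-1},\dots,x_{n}\right)$ one has $A\left(x_{n+k},\dots,x_{n+1}\right)=T_{n}\phi_{n}$, whence
\[
    g_{n} = \left(T_{n}-\Id\right)\phi_{n} + \left(\phi_{n}+B\right),
\]
and the remainder $\tilde{g}_{n}:=\phi_{n}+B$ depends only on $x_{n+k-1},\dots,x_{n}$. Since $\phi_{n}$ has exactly the support required by the first inclusion, $\mathcal{E}\left(\left(T_{n}-\Id\right)\phi_{n}\right)=0$, so $\mathcal{E}(\tilde{g}_{n})=0$ as well, but now $\tilde{g}_{n}$ is of strictly lower order. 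Iterating this peeling reduces $g_{n}$, modulo total differences, to a function of $x_{n}$ alone lying in $\ker\mathcal{E}$, i.e. a constant.

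The main obstacle is organising this induction cleanly and, in particular, treating the base case: a nonzero constant $c$ is annihilated by $\mathcal{E}$ yet cannot be written as $\left(T_{n}-\Id\right)f$ with an \emph{autonomous} $f$, since such a difference vanishes when all arguments coincide; it requires instead the explicitly $n$-dependent primitive $f_{n}=c\,n$. This is precisely why the lemma is stated with $f_{n}$ allowed to carry the index $n$, and I would flag this point explicitly. Collecting the total differences peeled off at each stage, together with the term $c\,n$, then assembles the desired $f_{n}$ satisfying $g_{n}=\left(T_{n}-\Id\right)f_{n}$, which completes the nontrivial inclusion and hence the proof.
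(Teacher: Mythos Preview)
Your proposal is correct and follows essentially the same route as the paper: both prove $\Im(T_{n}-\Id)\subseteq\ker\mathcal{E}$ by a direct calculation (the paper via a telescoping sum, you via the equivalent commutation rule), and both obtain the reverse inclusion by differentiating $\mathcal{E}(g_{n})=0$ with respect to the extreme variable $x_{n+k}$ to force $\partial^{2}g_{n}/\partial x_{n}\partial x_{n+k}=0$, splitting off a total difference, and inducting on the order down to a function of $n$ alone. Your remark that the residual constant forces the primitive $f_{n}=c\,n$ to be genuinely non-autonomous is exactly the content of the paper's Remark following the lemma.
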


A proof of Lemma \ref{lem:zeroder} in the context of difference variational
complex can be found in \cite{HydonMansfield2004,Kupershmidt1985book}.
Here we present a simple proof which does not require any knowledge
of the geometrical theory of discrete calculus of variations.
However, is worth showing such proof since 
the computational techniques employed
will be used when constructing discrete Lagrangians in
section \ref{sec:examples}.

\begin{proof}
    First we prove the inclusion $\Im\left( T_{n}-\Id \right)\subset\ker \mathcal{E}$.
    Assume $g_{n}\in \Im\left( T_{n}-\Id \right)$ and apply $\mathcal{E}$ to it:
    \begin{equation}
        \mathcal{E}\left[ g_{n}\left( x_{n+k},\dots,x_{n} \right) \right]
        =
        \mathcal{E}\left[ \left( T_{n}-\Id \right)f_{n}\left( x_{n+k-1},\dots,x_{n} \right) \right].
        \label{eq:applfn}
    \end{equation}
    After some trivial algebra using the definition of the
    Euler operator \eqref{eq:EulerOperator} we find following identity:
    \begin{equation}
        \mathcal{E}\left[ g_{n}\left( x_{n+k},\dots,x_{n} \right) \right]
        =
        \pdv{}{x_{n}}
         \sum_{l=0}^{k}  
         \left[f_{n+1-l}\left( x_{n+k-l},\dots,x_{n-l+1} \right)
            -f_{n-l}\left( x_{n+k-l-1},\dots,x_{n-l} \right)
        \right].
        \label{eq:applfn2}
    \end{equation}
    The sum on the right hand side of equation \eqref{eq:applfn2} is
    telescopic, therefore its value is given by:
    \begin{equation}
        \mathcal{E}\left[ g_{n}\left( x_{n+k},\dots,x_{n} \right) \right]
        =
        \pdv{}{x_{n}} 
         \left[f_{n+1}\left( x_{n+k},\dots,x_{n+1} \right)
            -f_{n-l}\left( x_{n-1},\dots,x_{n-k} \right)
        \right]=0.
        \label{eq:applfn3}
    \end{equation}
    This proves that $g_{n}\in \ker\mathcal{E}$.

    Now, we prove the reverse inclusion $\ker \mathcal{E}\subset\Im\left( T_{n}-\Id \right)$.
    We proceed by induction on the number of points.
    \\
    \noindent
    \emph{Case $k=1$.}
    Assume $g_{n}=g_{n}\left( x_{n+1},x_{n} \right)\in\ker \mathcal{E}$, that is:
    \begin{equation}
        \pdv{g_{n}}{x_{n}}\left( x_{n+1},x_{n} \right) 
        + \pdv{g_{n-1}}{x_{n}}\left( x_{n},x_{n-1} \right)=0.
        \label{eq:gndef}
    \end{equation}
    Differentiating the previous equation with respect to $x_{n+1}$ we obtain
    that $g_{n}$ solves the following partial differential equation for all
    $n$:
    \begin{equation}
        \pdv{g_{n}}{x_{n+1},x_{n}}\left( x_{n+1},x_{n} \right)=0.
        \label{eq:gndef2}
    \end{equation}
    This last equation imply:
    \begin{equation}
        g_{n}\left( x_{n+1},x_{n} \right) =
        g_{n}^{(1)}\left( x_{n+1} \right)+g_{n}^{(2)}\left( x_{n} \right).
        \label{eq:gndefsol}
    \end{equation}
    Taking advantage of the arbitrariness of $g_{n}^{(1)}$ and $g_{n}^{(2)}$ we
    can write 
    $g_{n}^{(2)}\left( x_{n+1} \right)=G_{n}\left( x_{n} \right)
    -g_{n}^{\left( 1 \right)}\left( x_{n} \right)$, that is:
    \begin{equation}
        g_{n}\left( x_{n+1},x_{n} \right) =
        G_{n}\left(x_{n}  \right)+
        \left(T_{n}-\Id\right)g_{n}^{(1)}\left( x_{n} \right).
        \label{eq:gndefsol2}
    \end{equation}
    As $\left(T_{n}-\Id\right)g_{n}^{(1)}\left( x_{n} \right)\in\Im\left( T_{n}-\Id \right)$
    substituting \eqref{eq:gndefsol2} into \eqref{eq:gndef} we have
    $G_{n}'\left( x_{n} \right)=0$.
    This implies $G_{n}\left( x_{n} \right)=G_{n}$, or in \eqref{eq:gndefsol2}:
    \begin{equation}
        g_{n}\left( x_{n+1},x_{n} \right) =
        G_{n}+
        \left(T_{n}-\Id\right)g_{n}^{(1)}\left( x_{n} \right)=
        \left(T_{n}-\Id\right)\left[ g_{n}^{(1)}\left( x_{n} \right)+\tilde{G}_{n} \right].
        \label{eq:gndefsol3}
    \end{equation}
    In formula \eqref{eq:gndefsol3} we represented $G_{n}$ as a discrete
    integration $G_{n}=\left( T_{n}-\Id \right)\tilde{G}_{n}$, 
    as we can always do when there is no explicit dependence
    on $x_{n}$ and its shifts.
    From \eqref{eq:gndefsol3} we have that for $k=1$ $g_{n}\in\Im\left( T_{n}-\Id \right)$
    with
    \begin{equation}
        f_{n}\left( x_{n} \right) =g_{n}^{(1)}\left( x_{n} \right)+\tilde{G}_{n}.
        \label{eq:fnk2def}
    \end{equation}
    \\
    \noindent
    \emph{Case $k>1$.}
    Assume $g_{n}=g_{n}\left( x_{n+k},\dots,x_{n} \right)\in\ker \mathcal{E}$ for
    $k>1$, and that the property
    $\ker\mathcal{E} \subset \Im\left( T_{n}-\Id \right)$ holds for all function $h_{n}$
    depending on at most $k-1$ points.
    As $g_{n}\in\ker \mathcal{E}$ we have:
    \begin{equation}
        \sum_{l=0}^{k}\pdv{g_{n-l}}{x_{n}}\left( x_{n+k-l},\dots,x_{n-l} \right)=0.
        \label{eq:gndefk}
    \end{equation}
    If we differentiate \eqref{eq:gndefk} with respect to $x_{n+k}$ we obtain:
    \begin{equation}
        \pdv{g_{n-l}}{x_{n+k},x_{n}}\left( x_{n+k-l},\dots,x_{n-l} \right).
        \label{eq:gndefk2}
    \end{equation}
    Reasoning in analogous way as in the case $k=1$ we have that
    we can write the solution of the previous partial differential
    equation as:
    \begin{equation}
        g_{n}\left( x_{n+k},\dots,x_{n} \right) =
        G_{n}\left( x_{n+k-1},\dots,x_{n} \right)+
        \left(T_{n}-\Id\right)\hat{g}_{n}\left( x_{n+k-1},\dots,x_{n} \right).
        \label{eq:gndefksol}
    \end{equation}
    From the fact that $\Im\left( T_{n}-\Id \right)\subset \ker\mathcal{E}$ inserting
    \eqref{eq:gndefksol} into \eqref{eq:gndefk2} we obtain:
    \begin{equation}
        \sum_{l=0}^{k-1}\pdv{G_{n-l}}{x_{n}}\left( x_{n+k-l},\dots,x_{n-l} \right)=0.
        \label{eq:Gndefk}
    \end{equation}
    That is, $G_{n}\left( x_{n+k-1},\dots,x_{n} \right)\in\ker\mathcal{E}$.
    From induction it follows:
    \begin{equation}
        G_{n}\left( x_{n+k-1},\dots,x_{n} \right) =
        \left(T_{n}-\Id\right)\tilde{g}_{n}\left( x_{n+k-2},\dots,x_{n} \right).
        \label{eq:Gndefksol}
    \end{equation}
    Inserting \eqref{eq:Gndefksol} into \eqref{eq:gndefksol} we have
    that $g_{n}\left( x_{n+k},\dots,x_{n} \right)\in\Im\left( T_{n}-\Id \right)$ with
    the function $f_{n}\left( x_{n+k-1},\dots,x_{n} \right)$ given by:
    \begin{equation}
        f_{n}\left( x_{n+k-1},\dots,x_{n} \right)
        =\tilde{g}_{n}\left( x_{n+k-2},\dots,x_{n} \right)
        +\hat{g}_{n}\left( x_{n+k-1},\dots,x_{n} \right).
        \label{eq:fndefk}
    \end{equation}
    This concludes the proof of the lemma.
\end{proof}

\begin{remark}
    We remark that in Lemma \ref{lem:zeroder} the presence of
    the subscript $n$ is necessary when proving the inclusion
    $\ker \mathcal{E}\subset\Im\left( T_{n}-\Id \right)$ with the proposed technique.
    Indeed, if the functions $g_{n}$ and $f_{n}$ are autonomous we have an 
    obstruction to the proof already in the case $k=1$.
    In this case the solution of equation \eqref{eq:gndef2}
    read as:
    \begin{equation}
        g\left( x_{n+1},x_{n} \right) = G\left( x_{n} \right)
        + \left( T_{n}-\Id \right)g_{n}^{(1)}\left( x_{n} \right).
        \label{eq:gdef1aut}
    \end{equation}
    In \eqref{eq:gndef} this implies $G'\left( x_{n} \right)=0$,
    that is $G\left( x_{n} \right)=K_{1}$, with $K_{1}$ a constant.
    The function:
    \begin{equation}
        g\left( x_{n+1},x_{n} \right) = K_{1}
        + \left( T_{n}-\Id \right)g_{n}^{(1)}\left( x_{n} \right),
        \label{eq:gdef1aut2}
    \end{equation}
    is in total difference form summing back equation \eqref{eq:gdef1aut2}:
    \begin{equation}
        f_{n}\left( x_{n} \right) =g_{n}^{(1)}\left( x_{n} \right) 
        + \left( T_{n}-\Id \right)^{-1}\left( K_{1} \right)+K_{2}
        =g_{n}^{(1)}\left( x_{n} \right) +K_{1}n+ K_{2}.
        \label{eq:fdef1aut}
    \end{equation}
    That is, the function $f_{n}$ will necessary depend explicitly on $n$.
    \label{rem:nonautnec}
\end{remark}

Lemma \ref{lem:zeroder} has the following immediate corollary:
\begin{corollary}
    If two discrete Lagrangians $L_{1}$ and $L_{2}$ differ by a total 
    difference, i.e. there exists a function $f_{n}=f_{n}\left( x_{n+k-1},\dots,x_{n} \right)$
    such that:
    \begin{equation}
        L_{2} = L_{1} + \left( T_{n}-\Id \right)f_{n}\left( x_{n+k-1},\dots,x_{n} \right),
        \label{eq:lagrdiff}
    \end{equation}
    then they define the same discrete Euler--Lagrange equations.
    \label{cor:lagrequiv}
\end{corollary}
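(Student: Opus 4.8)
The plan is to deduce this immediately from Lemma \ref{lem:zeroder} together with the linearity of the Euler operator. The key observation is that $\mathcal{E}$, as defined in \eqref{eq:EulerOperator}, is a linear differential-difference operator: it is a finite sum of terms $T^{-l}_{n}\partial/\partial x_{n+l}$, each of which acts linearly on summand functions of the form \eqref{eq:lagrgen}. Consequently, applying $\mathcal{E}$ to both sides of \eqref{eq:lagrdiff} and splitting the right hand side by linearity will isolate the contribution of the total difference term.

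Concretely, I would apply $\mathcal{E}$ to the defining relation $L_{2} = L_{1} + \left( T_{n}-\Id \right)f_{n}$ and use linearity to write
\begin{equation}
    \mathcal{E}(L_{2}) = \mathcal{E}(L_{1}) + \mathcal{E}\!\left[ \left( T_{n}-\Id \right)f_{n}\left( x_{n+k-1},\dots,x_{n} \right) \right].
    \label{eq:corsplit}
\end{equation}
The second term on the right hand side is the image under $T_{n}-\Id$ of a function depending on $k$ points, hence it lies in $\Im\left( T_{n}-\Id \right)$. By the forward inclusion $\Im\left( T_{n}-\Id \right)\subset\ker\mathcal{E}$ established in the first part of the proof of Lemma \ref{lem:zeroder} (equations \eqref{eq:applfn}--\eqref{eq:applfn3}), this term is annihilated by $\mathcal{E}$.

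Substituting $\mathcal{E}\!\left[ \left( T_{n}-\Id \right)f_{n} \right]=0$ into \eqref{eq:corsplit} yields $\mathcal{E}(L_{2}) = \mathcal{E}(L_{1})$, so that the two variational derivatives agree identically. Since the discrete Euler--Lagrange equations associated to a Lagrangian are precisely $\mathcal{E}(L)=0$, the equations $\mathcal{E}(L_{1})=0$ and $\mathcal{E}(L_{2})=0$ coincide, which is the claimed conclusion. There is no genuine obstacle here: the only ingredient beyond formal manipulation is the inclusion $\Im\left( T_{n}-\Id \right)\subset\ker\mathcal{E}$, which is already available from Lemma \ref{lem:zeroder}; the rest is the linearity of $\mathcal{E}$, so the entire argument reduces to a single application of that lemma.
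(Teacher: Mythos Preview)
Your argument is correct and is essentially identical to the paper's own proof: apply the Euler operator to \eqref{eq:lagrdiff}, use linearity, and invoke Lemma \ref{lem:zeroder} to kill the total difference term. The only difference is that you spell out the linearity step explicitly, whereas the paper leaves it implicit.
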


\begin{proof}
    The thesis follows immediately applying the Euler operator
    \eqref{eq:EulerOperator} to equation \eqref{eq:lagrdiff} and
    noting that the total difference on the right hand side disappears
    using Lemma \ref{lem:zeroder}.
\end{proof}

Using corollary \eqref{cor:lagrequiv} we define the following relation on
discrete Lagrangians: 

\begin{definition}
    Two discrete Lagrangians $L_{1}$ and $L_{2}$ are called
    equivalent, denoted by $\lequiv$, if they differ for a
    total difference, i.e.:
    \begin{equation}
        L_{1} \lequiv L_{2} \iff 
        L_{1} = L_{2} + \left( T_{n} -\Id\right) f\left( x_{n+k-1},\dots,x_{n} \right).
        \label{eq:lequiv}
    \end{equation}
    \label{def:equivlagr}
\end{definition}

Is it easy to prove the following proposition:
\begin{proposition}
    The relation $\lequiv$ is an equivalence relation.
    That is, it possesses the following properties:
    \label{prop:lequiv}
    \begin{description}
        \item[Reflexivity] $L\lequiv L$.
        \item[Symmetry] If $L_{1}\lequiv L_{2}$ then $L_{2}\lequiv L_{1}$.
        \item[Transitivity] If $L_{1}\lequiv L_{2}$ and $L_{2}\lequiv L_{3}$
            then $L_{1}\lequiv L_{3}$.
    \end{description}
\end{proposition}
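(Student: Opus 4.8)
The plan is to verify the three defining properties directly, exploiting the single structural fact that $T_n-\Id$ is a \emph{linear} operator. Consequently its image $\Im(T_n-\Id)$, namely the space of total differences, is a linear subspace of the functions depending on the points $x_{n+k-1},\dots,x_n$, and each of the three properties reduces to an elementary closure statement for this subspace. In particular, the whole proposition is nothing but the familiar observation that the relation ``differing by an element of a fixed linear subspace'' is automatically an equivalence relation.

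First I would dispatch \emph{reflexivity}: taking $f\equiv 0$ yields $(T_n-\Id)f=0$, so that $L=L+(T_n-\Id)\cdot 0$ and hence $L\lequiv L$. For \emph{symmetry}, suppose $L_1\lequiv L_2$, so that $L_1=L_2+(T_n-\Id)f$ for some $f=f(x_{n+k-1},\dots,x_n)$. Rearranging and using linearity of $T_n-\Id$, I obtain $L_2=L_1-(T_n-\Id)f=L_1+(T_n-\Id)(-f)$; since $-f$ depends on the same shifted variables as $f$, this exhibits the required total difference and gives $L_2\lequiv L_1$.

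For \emph{transitivity}, assume $L_1\lequiv L_2$ and $L_2\lequiv L_3$, with $L_1=L_2+(T_n-\Id)f_1$ and $L_2=L_3+(T_n-\Id)f_2$. Substituting the second relation into the first and invoking linearity once more produces $L_1=L_3+(T_n-\Id)(f_1+f_2)$; as $f_1$ and $f_2$ both depend on $x_{n+k-1},\dots,x_n$, so does their sum, and therefore $L_1\lequiv L_3$. The only point needing any attention is to confirm that the admissible class of generating functions is closed under negation and under addition, so that $-f$ and $f_1+f_2$ are again legitimate total-difference generators of the correct functional form; this is immediate, since these operations do not alter the set of variables on which the functions depend. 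I therefore expect no genuine obstacle here: the argument is purely formal and relies on nothing beyond the linearity of $T_n-\Id$ already used implicitly in Lemma \ref{lem:zeroder} and Corollary \ref{cor:lagrequiv}.
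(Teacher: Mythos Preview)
Your proof is correct and is exactly the routine verification one would expect; the paper in fact omits the proof entirely, stating only that it is trivial, so your argument fills in precisely what the authors left to the reader.
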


The proof of Proposition \ref{prop:lequiv} is trivial and it is therefore
omitted.

\begin{remark}
    Proposition \ref{prop:lequiv} implies that Lagrangians are not functions, 
    but rather \emph{equivalence classes} of functions up to the equivalence relation 
    defined by $\lequiv$ \eqref{eq:lequiv}. 
    A Lagrangian equivalent to a constant function is said to be a \emph{trivial}, 
    as its Euler-Lagrange equations are identically satisfied, i.e.
    they are not difference equations.
    \label{rem:lequiv}
\end{remark}

Finally we give the following definition:
\begin{definition}
    A discrete Lagrangian $L=L\left( x_{n+k},\dots,x_{n} \right)$ is called \emph{normal} if
    \begin{equation}
        \pdv{L}{x_{n+k},x_{n}}\left( x_{n+k},\dots,x_{n} \right) \neq 0.
        \label{eq:lagrnormal}
    \end{equation}
    \label{def:normal}
\end{definition}

The importance of considering normal Lagrangians come from the fact that
non-normal Lagrangians define discrete Euler--Lagrange equations of order 
$2k-2$ at most. This is the content of the following lemma:
\begin{lemma}
    A non-normal discrete Lagrangian $L$ defined on $k$ points is equivalent
    to a normal discrete Lagrangian $\tilde{L}$ defined on $k-m$ points, 
    where $m\in\{1,\dots,k\}$ is the smallest integer such that:
    \begin{equation}
        \pdv{L}{x_{n+k-m},x_{n}}\left( x_{n+k},\dots,x_{n} \right) \neq 0.
        \label{eq:lagrabnormal}
    \end{equation}
    \label{lem:normal}
\end{lemma}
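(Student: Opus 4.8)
The plan is to exhibit an explicit total difference that strips the highest shift $x_{n+k}$ from $L$, and then to iterate this reduction, invoking Corollary \ref{cor:lagrequiv} to guarantee that the discrete Euler--Lagrange equations are preserved at every step. Since non-normality means $\frac{\partial^2 L}{\partial x_{n+k}\partial x_n}=0$, the derivative $\frac{\partial L}{\partial x_{n+k}}$ is independent of $x_n$; integrating once in $x_{n+k}$ I can split
\begin{equation}
    L\left( x_{n+k},\dots,x_n \right) = A\left( x_{n+k},\dots,x_{n+1} \right) + B\left( x_{n+k-1},\dots,x_n \right),
\end{equation}
where $A$ carries the entire $x_{n+k}$-dependence and $B$ is the remaining $x_{n+k}$-independent part. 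This is exactly the integration step already used in the proof of Lemma \ref{lem:zeroder}.

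The key observation is that the top block $A$ is, modulo a total difference, a function on one fewer point. Indeed, writing $A\left( x_{n+k},\dots,x_{n+1} \right) = \left( T_n-\Id \right)A\left( x_{n+k-1},\dots,x_n \right) + A\left( x_{n+k-1},\dots,x_n \right)$ and discarding the total difference by Corollary \ref{cor:lagrequiv}, I obtain
\begin{equation}
    L \lequiv L^{(1)}\left( x_{n+k-1},\dots,x_n \right) = A\left( x_{n+k-1},\dots,x_n \right) + B\left( x_{n+k-1},\dots,x_n \right),
\end{equation}
an equivalent Lagrangian depending on one point less. I would then repeat the construction on $L^{(1)}$: as long as the current representative is non-normal the same integration-and-absorption step applies, and since each application removes the leading variable, the procedure terminates after finitely many steps at a normal Lagrangian $\tilde{L}$. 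Transitivity of $\lequiv$ (Proposition \ref{prop:lequiv}) then assembles the individual equivalences into $L\lequiv\tilde{L}$.

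The final, and genuinely delicate, step is to show that the number of reductions is exactly $m$, i.e. that the process first produces a normal Lagrangian precisely when the leading variable has descended to $x_{n+k-m}$. The difficulty is that absorbing $A$ into a total difference shifts its arguments downward, so the mixed derivative governing normality of $L^{(1)}$ is \emph{not} simply $\frac{\partial^2 L}{\partial x_{n+k-1}\partial x_n}$ but a combination of a shifted second derivative of $A$ and a second derivative of $B$. The crux is therefore to track how the mixed derivative with respect to the current top variable and $x_n$ transforms under each shift-absorption, and to use the minimality of $m$ in the hypothesis $\frac{\partial^2 L}{\partial x_{n+k-m}\partial x_n}\neq 0$, together with the vanishing of all earlier such derivatives for $j=0,\dots,m-1$, to certify that $\tilde{L}$ is normal and depends on exactly $k-m$ points while excluding a premature or delayed termination.

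I expect this bookkeeping of the transformed second derivatives, rather than the reduction mechanism itself, to be the main obstacle: the mechanism in the first two paragraphs is robust and purely computational, but controlling the recoupling between $x_n$ and the successively lowered top variable is where the content of the statement concentrates. A convenient consistency check throughout will be that, by the equivalence $\ker\mathcal{E}=\Im(T_n-\Id)$ underlying Corollary \ref{cor:lagrequiv}, the order of the shared Euler--Lagrange equation is an invariant of the equivalence class, so the terminal normal representative is forced to live on the number of points dictated by that order.
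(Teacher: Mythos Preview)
Your reduction mechanism is exactly the paper's: solve $\partial^2 L/\partial x_{n+k}\partial x_n=0$ to split $L=A(x_{n+k},\dots,x_{n+1})+B(x_{n+k-1},\dots,x_n)$, absorb $A$ as a total difference via $A=(T_n-\Id)f_n+f_n$ with $f_n$ the down-shift of $A$, set $L_1=f_n+B$, and iterate. The paper records this as $L=L_1+(T_n-\Id)f_n\lequiv L_1$ and then repeats on $L_1$.

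Where you and the paper diverge is precisely the step you flag as delicate. The paper carries out no bookkeeping at all: it simply asserts ``Now if $\partial^2 L/\partial x_{n+k-1}\partial x_n=0$, we have that $\partial^2 L_1/\partial x_{n+k-1}\partial x_n=0$,'' and proceeds with the iteration. Your concern is well founded, because that implication is false in general. Since $L=L_1+T_nf_n-f_n$ and $T_nf_n$ does not depend on $x_n$,
\[
\frac{\partial^2 L_1}{\partial x_{n+k-1}\,\partial x_n}
=\frac{\partial^2 L}{\partial x_{n+k-1}\,\partial x_n}
+\frac{\partial^2 f_n}{\partial x_{n+k-1}\,\partial x_n},
\]
and the last term need not vanish. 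A concrete failure is $k=2$, $L=x_{n+2}x_{n+1}$: here $\partial^2 L/\partial x_{n+2}\partial x_n=\partial^2 L/\partial x_{n+1}\partial x_n=\partial^2 L/\partial x_n^2=0$, so no $m\in\{1,2\}$ satisfies \eqref{eq:lagrabnormal}; yet one reduction gives $L_1=x_{n+1}x_n$, which is normal with the non-trivial Euler--Lagrange equation $x_{n+1}+x_{n-1}=0$. Thus the identification of the terminal index $m$ via the mixed derivatives of the \emph{original} $L$ (and with it Corollary~\ref{cor:trivlagr}) is not correct as stated. The reduction procedure itself is sound and does terminate at a normal representative on fewer points; what cannot be read off from $L$ in the way claimed is \emph{how many} steps this takes. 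So the obstacle you anticipate is genuine, and the paper's proof does not overcome it either---it merely asserts the step you rightly single out as the crux.
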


\begin{proof}
    Assume that the discrete Lagrangian $L=L\left( x_{n+k},\dots,x_{n} \right)$
    is non-normal.
    Then from definition \ref{def:normal} we have:
    \begin{equation}
       \pdv{L}{x_{n+k},x_{n}}\left( x_{n+k},\dots,x_{n} \right) = 0.
       \label{eq:lagrnormal1}
    \end{equation}
    Solving this partial differential equation and using the same
    reasoning as in the proof of Lemma \ref{lem:zeroder} we obtain
    that the Lagrangian $L$ has the following form:
    \begin{equation}
        \begin{aligned}
            L &= L_{1}\left( x_{n+k-1},\dots,x_{n} \right) 
            + \left(T_{n} -\Id  \right) f_{n}\left( x_{n+k-1},\dots,x_{n} \right)
            \\
            &\lequiv L_{1}\left( x_{n+k-1},\dots,x_{n} \right).
        \end{aligned}
        \label{eq:lagrnormalstep1}
    \end{equation}
    Therefore the original discrete Lagrangian $L$ is equivalent to
    a discrete Lagrangian $L_{1}$ defined on $k-1$ point. 
    The Euler--Lagrange equation of $L_{1}$ are defined on $2k-2$ at most.
    
    Now if:
    \begin{equation}
        \pdv{L}{x_{n+k-1},x_{n}}=0,
        \label{eq:lagrnormal2}
    \end{equation}
    we have that:
    \begin{equation}
        \pdv{L_{1}}{x_{n+k-1},x_{n}}=0,
        \label{eq:lagrnormal2bis}
    \end{equation}
    and the above reasoning can be iterated. Therefore, we obtain that 
    \begin{equation}
        L \lequiv L_{1}\left( x_{n+k-1},\dots,x_{n} \right) 
        \lequiv L_{2}\left( x_{n+k-2},\dots,x_{n} \right). 
        \label{eq:lagrnormalstep2}
    \end{equation}
    
    This procedure can be iterated until we reach $m\in\{1,\dots,k\}$ defined by
    the condition \eqref{eq:lagrabnormal}.
    Defining $\tilde{L}\equiv L_{m}$ the chain of equivalence gives us:
    \begin{equation}
        L\left( x_{n+k},\dots,x_{n} \right)
        \lequiv
        \tilde{L}\left( x_{n+k-m},\dots,x_{n} \right).
        \label{eq:equivm}
    \end{equation}
    This concludes the proof of the lemma.
\end{proof}

\begin{corollary}
    A discrete Lagrangian $L$ \eqref{eq:lagrgen} such that:
    \begin{equation}
        \pdv{L}{x_{n},x_{n+m}}=0, 
        \quad
        \forall m \in \left\{ 1,\dots,k \right\}
        \label{eq:lagrtrivial}
    \end{equation}
    is trivial.
    \label{cor:trivlagr}
\end{corollary}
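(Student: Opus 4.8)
The plan is to read the hypothesis as a maximally strong failure of normality and then to run the reduction of Lemma \ref{lem:normal} all the way to its end. First I would observe that the case $m=k$ of the hypothesis, namely $\pdv{L}{x_{n},x_{n+k}}=0$, is exactly the statement that $L$ is non-normal in the sense of Definition \ref{def:normal}. Hence Lemma \ref{lem:normal} applies: solving $\pdv{L}{x_{n+k},x_{n}}=0$ as in its proof, I would write $L\lequiv L_{1}\left(x_{n+k-1},\dots,x_{n}\right)$ with the top shift $x_{n+k}$ eliminated.

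Next I would argue that the reduction cannot stop before every forward shift has been removed. In Lemma \ref{lem:normal} the procedure halts at the smallest $m\in\{1,\dots,k\}$ for which $\pdv{L}{x_{n+k-m},x_{n}}\neq0$. Setting $j=k-m$, these are precisely the mixed derivatives $\pdv{L}{x_{n+j},x_{n}}$ for $j=k-1,k-2,\dots,1$, and the hypothesis $\pdv{L}{x_{n},x_{n+m}}=0$ for all $m\in\{1,\dots,k\}$ makes each of them vanish. Thus no genuine two-point coupling with $x_{n}$ survives at any stage, and iterating the equivalence produces a chain
\[
L\lequiv L_{1}\left(x_{n+k-1},\dots,x_{n}\right)\lequiv\cdots\lequiv L_{k}\left(x_{n}\right),
\]
whose final term depends only on the single variable $x_{n}$. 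Evaluating the Euler operator on a one-point Lagrangian via \eqref{eq:elgenaut} then gives $\mathcal{E}(L_{k})=L_{k}'(x_{n})$, which couples no shift of $x_{n}$ to any other and hence cannot constitute a difference equation; equivalently $L_{k}$ lies in $\ker\mathcal{E}=\Im\left(T_{n}-\Id\right)$ up to a constant, so $L$ is trivial in the sense of Remark \ref{rem:lequiv}.

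The main obstacle is the middle step, and it is genuinely delicate: I must verify that the vanishing of the mixed derivatives with $x_{n}$ is really inherited along the chain, i.e. that after each application of Lemma \ref{lem:normal} the reduced Lagrangian $L_{i}$ still fails to couple $x_{n}$ to its new top shift, so that the reduction may legitimately be iterated down to $L_{k}(x_{n})$. This is the bookkeeping carried out for one step in \eqref{eq:lagrnormal2}--\eqref{eq:lagrnormal2bis}, but the subtle point is that each reduction absorbs a total difference $\left(T_{n}-\Id\right)f_{n}$ which shifts the arguments of the surviving part and can, a priori, reintroduce a coupling between $x_{n}$ and the remaining variables. I would therefore spend the bulk of the argument tracking exactly how those absorbed total-difference terms act on the relevant second derivatives, confirming that under the full set of hypotheses $\pdv{L}{x_{n},x_{n+m}}=0$ the propagation does go through and the reduction terminates only at dependence on $x_{n}$ alone.
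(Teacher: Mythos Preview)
Your approach is exactly the paper's: apply Lemma~\ref{lem:normal} iteratively until only dependence on $x_{n}$ remains, then declare the resulting one-point Lagrangian trivial. The paper compresses this into two sentences; you unpack the iteration and, importantly, flag the propagation of the vanishing mixed derivatives as the delicate step.

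That concern is not a bookkeeping detail to be cleaned up---it is a genuine obstruction, and the corollary as stated is in fact false. Take $k=2$ and $L=x_{n+2}x_{n+1}$, viewed as a function of $(x_{n+2},x_{n+1},x_{n})$. Then $\partial L/\partial x_{n}\equiv0$, so all the hypotheses $\pdv{L}{x_{n},x_{n+m}}=0$ hold; yet $L-x_{n+1}x_{n}=(T_{n}-\Id)(x_{n+1}x_{n})$, whence $L\lequiv x_{n+1}x_{n}$, whose Euler--Lagrange equation $x_{n+1}+x_{n-1}=0$ is a genuine second-order difference equation. Tracing the reduction of Lemma~\ref{lem:normal}, one writes $L=A(x_{n+2},x_{n+1})+B(x_{n+1},x_{n})$ with $A=x_{n+2}x_{n+1}$ and $B=0$, so that $L_{1}=A(x_{n+1},x_{n})=x_{n+1}x_{n}$: the down-shift of $A$ manufactures $\pdv{L_{1}}{x_{n+1},x_{n}}=1$ even though $\pdv{L}{x_{n+1},x_{n}}=0$. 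This is precisely the reintroduced coupling you anticipated, and it shows that the implication \eqref{eq:lagrnormal2}$\Rightarrow$\eqref{eq:lagrnormal2bis} in the proof of Lemma~\ref{lem:normal} already fails. Consequently neither the paper's short argument nor your expanded one can be completed as planned. There is also a second, independent gap at the end: even granting $L\lequiv L_{k}(x_{n})$, a one-point Lagrangian is not automatically trivial in the sense of Remark~\ref{rem:lequiv}, since $\mathcal{E}\bigl(L_{k}\bigr)=L_{k}'(x_{n})$ vanishes identically only when $L_{k}$ is already constant; the equation $L_{k}'(x_{n})=0$ fails to be a difference equation, but it is not identically satisfied.
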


\begin{proof}
    From Lemma \ref{lem:normal} we have $L\lequiv \tilde{L}\left( x_{n} \right)$,
    but $\tilde{L}\left( x_{n} \right)\lequiv \tilde{L}_{0}$ where $\tilde{L}_{0}$ 
    is a constant.
    Therefore $L\lequiv \tilde{L}_{0}$ and it is a trivial Lagrangian.
\end{proof}

Since in this paper we are interested in the inverse problem of the
discrete calculus of variations for equations of the form \eqref{eq:recgen}
depending on \emph{exactly} $2k$ points , from now on, 
we will always consider to deal with normal discrete Lagrangians.

In the next section we see which kind of condition must be satisfied to
guarantee the existence of a discrete Lagrangian for a given difference equation
of order $2k$ with $k>1$.

\section{Method for finding discrete Lagrangians}
\label{sec:method}

As we said in the Introduction, we want to solve the inverse problem 
of the discrete calculus of variations for scalar difference equations,
i.e. we want to be able to state when
the difference equation \eqref{eq:recgen} can be derived from a
Lagrangian \eqref{eq:lagrgen}.
To this end our first step is to state and prove following technical
lemma:
\begin{lemma}
    Consider a differential operator $\vec{A}$
    acting on the independent variables $x_{n+k-1}$, $x_{n+k-2}$,
    \dots, $x_{n-k}$, i.e.:
    \begin{equation}
        \vec{A} = \sum_{i=-k}^{k-1} A_{i} \pdv{}{x_{n+i}}, 
        \quad 
        A_{i}=A_{i}\left( x_{n+k-1},\dots,x_{n-k} \right),\, i=1\dots,k,
        \label{eq:gendiffop}
    \end{equation}
    such that \emph{for every} function 
    $G=G\left( x_{n+k},x_{n+k-1},\dots,x_{n} \right)$
    with $k>1$ 
    evaluated on the solutions of
    the difference equation \eqref{eq:recgen}
    \begin{equation}
        G=G\left(F\left( x_{n+k-1},x_{n+k-1},\dots,x_{n-k} \right) ,x_{n+k-1},\dots,x_{n} \right).
        \label{eq:Gdef}
    \end{equation}
    %of the independent variables $x_{n+k-1},\dots,x_{n-k}$
    we have $\vec{A}\left( G \right) \equiv0$ identically.
    Then $\vec{A}$ is a linear combination of the following $k-1$ 
    differential operators:
    \begin{equation}
        \vec{A}^{+}_{m} = \pdv{F}{x_{n-m}}\pdv{}{x_{n-k}}- 
        \pdv{F}{x_{n-k}}\pdv{}{x_{n-m}},
        \quad
        m=1,\dots,k-1.
        \label{eq:annihil}
    \end{equation}
    \label{lem:annihil}
\end{lemma}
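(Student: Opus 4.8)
The plan is to expand $\vec{A}(G)$ on the solution surface \eqref{eq:Gdef} by the chain rule, exploit the arbitrariness of $G$ to annihilate most of the coefficients $A_i$, and then recognise the surviving condition as a single linear relation whose solution module is spanned by the operators \eqref{eq:annihil}. I would begin by treating $G$ as the function \eqref{eq:Gdef}, in which only $x_{n+k}$ has been replaced by $F(x_{n+k-1},\dots,x_{n-k})$; the arguments $x_{n+k-1},\dots,x_n$ survive as explicit slots of $G$, while those among $x_{n+k-1},\dots,x_{n-k}$ additionally enter through $F$. Applying $\vec{A}$ and collecting terms according to whether a variable enters $G$ directly or only through $F$ gives
\begin{equation*}
    \vec{A}(G) = \pdv{G}{x_{n+k}}\left(\sum_{i=-k}^{k-1} A_i \pdv{F}{x_{n+i}}\right) + \sum_{i=0}^{k-1} A_i \pdv{G}{x_{n+i}},
\end{equation*}
since the variables $x_{n-1},\dots,x_{n-k}$ appear only inside $F$.

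The key step is to use that this vanishes for \emph{every} $G$, which lets me probe it with coordinate test functions. Taking $G=x_{n+j}$ for $j=0,\dots,k-1$ --- none of which involves $x_{n+k}$ --- isolates $\vec{A}(G)=A_j$ and therefore forces $A_0=\dots=A_{k-1}=0$; taking $G=x_{n+k}$ then leaves the single relation $\sum_{i=-k}^{-1} A_i\,\partial F/\partial x_{n+i}=0$. After the relabelling $i=-m$ the operator has collapsed to $\vec{A}=\sum_{m=1}^{k} A_{-m}\,\partial/\partial x_{n-m}$ subject to the constraint $\sum_{m=1}^{k} A_{-m}\,\partial F/\partial x_{n-m}=0$.

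It remains to solve this one linear constraint. By the non-degeneracy condition \eqref{eq:nondeg} we have $\partial F/\partial x_{n-k}\neq0$, so I can solve for $A_{-k}$ in terms of the free coefficients $A_{-1},\dots,A_{-(k-1)}$, exhibiting the solution space as a module of rank $k-1$. A direct substitution then verifies
\begin{equation*}
    \vec{A} = -\frac{1}{\partial F/\partial x_{n-k}}\sum_{m=1}^{k-1} A_{-m}\,\vec{A}^{+}_{m},
\end{equation*}
where the coefficient of $\partial/\partial x_{n-k}$ produced on the right reproduces exactly $A_{-k}$ by virtue of the constraint, while the terms in $\partial/\partial x_{n-m}$ match directly. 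Since each $\vec{A}^{+}_{m}$ carries the nonvanishing term $-(\partial F/\partial x_{n-k})\,\partial/\partial x_{n-m}$, the operators \eqref{eq:annihil} are linearly independent over the ring of functions and hence form a basis of the solution module, which is exactly the claim.

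I expect the main obstacle to be the bookkeeping in the chain-rule step: one must carefully distinguish the indices $i\in\{0,\dots,k-1\}$, for which $x_{n+i}$ is simultaneously an explicit argument of $G$ and an argument of $F$, from the indices $i\in\{-k,\dots,-1\}$, for which $x_{n+i}$ enters only through $F$. Once this split is correctly in place, both the arbitrariness argument and the rank-$(k-1)$ linear algebra are routine.
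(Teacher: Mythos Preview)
Your proof is correct and follows essentially the same route as the paper: expand $\vec{A}(G)$ by the chain rule, use the arbitrariness of $G$ to kill $A_0,\dots,A_{k-1}$, and then identify the remaining constraint as a single linear relation whose solution space is the span of the $\vec{A}^{+}_{m}$.

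The only noteworthy difference is in how the arbitrariness of $G$ is exploited. The paper argues that the partial derivatives $\partial G/\partial F,\,\partial G/\partial x_{n+k-1},\dots,\partial G/\partial x_{n}$ are functionally independent in the generic case and then separately treats the degenerate case where they satisfy a relation \eqref{eq:Qdef}. You instead plug in the coordinate test functions $G=x_{n+j}$ and $G=x_{n+k}$, which immediately isolates each coefficient. This is a slightly more elementary and cleaner variant: it sidesteps the functional-dependence case analysis entirely, since coordinate functions always have independent partials. The paper's hyperplane/orthogonal-complement description and your explicit elimination of $A_{-k}$ via the non-degeneracy condition \eqref{eq:nondeg} are two equivalent ways of parametrising the same rank-$(k-1)$ solution module.
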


\begin{proof}
    Let us consider the operator in \eqref{eq:gendiffop} with unspecified
    coefficients $A_{i}$.
    We will fix the form of these coefficients by imposing the condition
    that the operator \eqref{eq:gendiffop} applied to any function
    $G$ as in \eqref{eq:Gdef} is identically zero. 
    Applying the operator \eqref{eq:gendiffop} applied to a function
    $G$ as in \eqref{eq:Gdef} and using the chain rule we obtain:
    \begin{equation}
        %\begin{aligned}
            \vec{A}\left( G \right) =%&=
            \left[
            \sum_{i=1}^{k} A_{-i} 
            \pdv{F}{x_{n-i}}
            \right]
            \pdv{G}{F}
            %\\
            %&+
            +\sum_{i=0}^{k-1} A_{i} 
            \left[
            \pdv{F}{x_{n+i}}
            \pdv{G}{F}
            +\pdv{G}{x_{n+i}}
        \right].
        %\end{aligned}
        \label{eq:gendiffappl}
    \end{equation}
    Let us assume for the moment that the partial derivatives of $G$ 
    \begin{equation}
        \pdv{G}{F},\pdv{G}{x_{n+k-1}},\dots,\pdv{G}{x_{n-k}}
        \label{eq:Gders}
    \end{equation}
    are \emph{functionally independent}, which is the most general case.
    Then as the condition  $\vec{A}\left( G \right)\equiv0$ in
    \eqref{eq:gendiffappl} must hold for all $G$ we can annihilate the 
    coefficients of the partial derivatives of $G$ \eqref{eq:Gders}.
    From \eqref{eq:gendiffappl} this implies that $A_{i}=0$ for 
    $i=0,\dots,k$.
    On the other hand from the derivative with respect to the
    first argument we obtain the following condition:
    \begin{equation}
            \sum_{i=1}^{k} A_{-i} 
            \pdv{F}{x_{n-i}}
            =0.
        \label{eq:hyperplane}
    \end{equation}
    Equation \eqref{eq:hyperplane} defines an hyperplane of dimension
    $k-1$ orthogonal to the gradient vector of $F$ with respect to
    the variables $\vec{x}^{-}=\left( x_{n-1},\dots,x_{n-k} \right)$:
    \begin{equation}
        \grad_{\vec{x}^{-}} F = \left( \pdv{F}{x_{n-1}},\dots,\pdv{F}{x_{n-k}} \right)^{T}.
        \label{eq:minusgrad}
    \end{equation}
    A basis for the orthogonal space $\left( \grad_{\vec{x}^{-}} F \right)^{\perp}$
    is given by the vectors:
    \begin{equation}
        \vec{v}_{1} =
        \begin{pmatrix}
            \pdv{F}{x_{n-k}}
            \\
            0
            \\
            0
            \\
            \vdots
            \\
            0
            \\
            -\pdv{F}{x_{n-1}}
        \end{pmatrix},
        \,
        \vec{v}_{2} =
        \begin{pmatrix}
            0
            \\
            \pdv{F}{x_{n-k}}
            \\
            0
            \\
            \vdots
            \\
            0
            \\
            -\pdv{F}{x_{n-2}}
        \end{pmatrix},
        \dots,
        \vec{v}_{k-1} =
        \begin{pmatrix}
            0
            \\
            0
            \\
            \\
            \vdots
            \\
            0
            \\
            \pdv{F}{x_{n-k}}
            \\
            -\pdv{F}{x_{n-k+1}}
        \end{pmatrix}.
        \label{eq:basevect}
    \end{equation}
    Inserting the components of this base into \eqref{eq:gendiffop}
    we obtain the vector fields \eqref{eq:annihil}.
    Since \eqref{eq:basevect} is a basis the thesis follows in the
    case the when the partial derivatives of $G$ \eqref{eq:Gders}
    are functionally independent.

    In the particular case when the partial derivatives of $G$
    \eqref{eq:Gders} are functionally dependent there exists a functional
    relation of the following kind:
    \begin{equation}
        Q\left( \frac{\partial G}{\partial F},\pdv{G}{x_{n+k-1}},\dots,\frac{\partial G}{\partial x_{n}} \right)=0,
        \label{eq:Qdef}
    \end{equation}
    with $Q$ a given function.
    Relation \eqref{eq:Qdef} means that $G$ solves a first-order 
    partial differential equation.
    This amounts to say that the function $G$ has the form:
    \begin{equation}
        G = G\left( I_{1},\dots,I_{k} \right), 
        \quad
        I_{j}=I_{j}\left( F,x_{n+k-1},\dots,x_{n} \right),
        \label{eq:Gdec}
    \end{equation}
    where the derivatives $\partial G/\partial I_{j}$ and
    functions $I_{j}$ are functionally independent and their partial
    derivatives are functionally independent.
    To each function $I_{j}$ in \eqref{eq:Gdec} the result holds because
    of the first part of the proof.
    By direct computation using the chain rule we obtain that the result 
    holds for $G$ itself.
    Therefore we obtain that the result holds even in the case when 
    the partial derivatives of $G$ \eqref{eq:Gders} are functionally dependent.
    This concludes the proof of the lemma.
\end{proof}

\begin{remark}
    In the proof of Lemma \ref{lem:annihil} only derivatives are involved.
    For this reason Lemma \ref{lem:annihil} generalises immediately
    to the case when the function $F$ in \eqref{eq:recgen} and $G$ in
    \eqref{eq:Gdef} depend explicitly on $n$.
    Therefore, to obtain the non-autonomous version of Lemma \ref{lem:annihil}
    just replace each occurrence of $F$ with $F_{n}$ and of $G$ with $G_{n}$.
    \label{rem:annihiln}
\end{remark}

\begin{remark}
    We underline that in order to prove Lemma \ref{lem:annihil} it
    is fundamental to assume that the differential operator $\vec{A}$
    \eqref{eq:gendiffop} \emph{annihilates identically} on \emph{all} 
    the functions of the form \eqref{eq:Gdef}.
    Indeed, it is always possible to find non-trivial differential operators
    of the form \eqref{eq:gendiffop} which annihilates a \emph{particular}
    function, even though its first derivatives are functionally independent.
    A simple example of this fact arises in dimension two and it 
    is given by radial functions. 
    That is, the function 
    $G\left( x,y \right) = g\left( x^{2}+y^{2} \right)$ is annihilated
    by the first-order linear differential operator:
    \begin{equation}
        \vec{A} = y \pdv{}{x}-x\pdv{}{y}.
        \label{eq:radialann}
    \end{equation}
    However, as is it known from the theory of first-order linear partial
    differential equations \cite{Arnold2004book}, the general solution
    of $\vec{A}\left( G \right)=0$ is given by radial functions.
    Therefore \emph{for every} non-radial function $H=H\left( x,y \right)$
    we will have $\vec{A}\left( H \right)\neq 0$.
    Therefore, the linear differential operator \eqref{eq:radialann}
    does not satisfy the hypotheses of Lemma \ref{lem:annihil}.
    \label{rem:notannihil}
\end{remark}

If the equation \eqref{eq:recgen} can be solved uniquely for $x_{n-k}$
the evolution of the difference equation is defined in both sides
and we can write:
\begin{equation}
    x_{n-k} = \widetilde{F}\left( x_{n+k},x_{n+k-1},\dots,x_{n-k+1} \right).
    \label{eq:recgeninv}
\end{equation}
\label{rem:rightoperator}
In this case we have a ``mirrored'' version of Lemma \eqref{lem:annihil}
expressed as following:
%
%For the differential operators \eqref{eq:annihil2} holds a result similar
%to Lemma \ref{lem:annihil}:
\begin{lemma}
    Consider a differential operator $\vec{\widetilde{A}}$
    acting on the independent variables $x_{n+k}$, $x_{n+k-2}$, \dots, $x_{n-k+1}$, 
    i.e.:
    \begin{equation}
        \vec{\widetilde{A}} = \sum_{i=-k+1}^{k} \widetilde{A}_{i} \pdv{}{x_{n+i}}, 
        \quad 
        \widetilde{A}_{i}=\widetilde{A}_{i}\left( x_{n+k},\dots,x_{n-k+1} \right),\, i=1\dots,k,
        \label{eq:gendiffopinv}
    \end{equation}
    such that \emph{for every function}
    $\widetilde{G}=\widetilde{G}\left( x_{n},x_{n-1},\dots,x_{n-k} \right)$ 
    where $k>1$, evaluated on the solutions of
    the difference equation \eqref{eq:recgeninv}
    \begin{equation}
        \widetilde{G}=\widetilde{G}\left( x_{n},x_{n-1},\dots,
        \widetilde{F}\left( x_{n+k},x_{n+k-1},\dots,x_{n-k+1} \right) \right)
        \label{eq:Gdefinv}
    \end{equation}
    we have $\vec{\widetilde{A}}\left( \widetilde{G} \right) \equiv0$.
    Then $\vec{\widetilde{A}}$ is a linear combination of the following
    $k-1$ differential operators:
    \begin{equation}
        \vec{A}^{-}_{m} = \pdv{\widetilde{F}}{x_{n+m}}\pdv{}{x_{n+k}}- 
        \pdv{\widetilde{F}}{x_{n+k}}\pdv{}{x_{n+m}},
        \quad
        m=1,\dots,k-1.
        \label{eq:annihil2}
    \end{equation}
    \label{lem:annihil2}
\end{lemma}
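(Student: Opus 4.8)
The plan is to deduce Lemma \ref{lem:annihil2} directly from the already-established Lemma \ref{lem:annihil} by exploiting the evident reflection symmetry between the two statements. Solving \eqref{eq:recgen} backwards for $x_{n-k}$ and testing against functions of the lower window $x_{n},\dots,x_{n-k}$ is nothing but the mirror image, under the index involution $i\mapsto -i$, of solving forwards for $x_{n+k}$ and testing against functions of the upper window. Since Lemma \ref{lem:annihil} involves only the chain rule and a local differential structure (no shift operator $T_{n}$ enters its proof), this relabelling is purely a renaming of the finitely many coordinates on which the operators act, and I expect it to transport the whole hypothesis/conclusion package across without any new analysis.

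Concretely, I would introduce new coordinates $y_{n+j}=x_{n-j}$, so that $\partial/\partial x_{n+i}=\partial/\partial y_{n-i}$. Under this substitution the backward recurrence \eqref{eq:recgeninv} becomes $y_{n+k}=\widetilde{F}(y_{n+k-1},\dots,y_{n-k})$, which is exactly of type \eqref{eq:recgen} with $\widetilde{F}$ in the role of $F$. The non-degeneracy $\partial\widetilde{F}/\partial x_{n+k}\neq 0$ required to apply Lemma \ref{lem:annihil} follows from the inverse-function relation $\partial\widetilde{F}/\partial x_{n+k}=\left(\partial F/\partial x_{n-k}\right)^{-1}$ together with the standing hypothesis \eqref{eq:nondeg}. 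Moreover the class of test functions $\widetilde{G}(x_{n},\dots,x_{n-k})$ maps bijectively onto the upper-window class $G(y_{n+k},\dots,y_{n})$ of \eqref{eq:Gdef}, so ``$\vec{\widetilde{A}}(\widetilde{G})\equiv 0$ for every $\widetilde{G}$'' becomes ``$\vec{A}(G)\equiv 0$ for every $G$'', while $\vec{\widetilde{A}}$, which acts on $x_{n+k},\dots,x_{n-k+1}$, becomes an operator acting on $y_{n+k-1},\dots,y_{n-k}$, precisely the admissible range \eqref{eq:gendiffop}. Lemma \ref{lem:annihil} then forces this operator to be a linear combination of the $\vec{A}^{+}_{m}$, and undoing the relabelling sends each $\vec{A}^{+}_{m}$ to the corresponding $\vec{A}^{-}_{m}$ of \eqref{eq:annihil2}, which is the claim.

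For self-containedness one may instead rerun the proof of Lemma \ref{lem:annihil} verbatim in the mirrored setting. Applying $\vec{\widetilde{A}}$ to $\widetilde{G}$ with the chain rule yields the analogue of \eqref{eq:gendiffappl}, in which the coefficient of $\partial\widetilde{G}/\partial\widetilde{F}$ is $\sum_{i=1}^{k}\widetilde{A}_{i}\,\partial\widetilde{F}/\partial x_{n+i}$, while the coefficients of the direct derivatives $\partial\widetilde{G}/\partial x_{n+i}$ for $i=-k+1,\dots,0$ are the $\widetilde{A}_{i}$ themselves. Assuming the derivatives of $\widetilde{G}$ functionally independent kills $\widetilde{A}_{i}$ for $i\leq 0$ and leaves the single hyperplane condition $\sum_{i=1}^{k}\widetilde{A}_{i}\,\partial\widetilde{F}/\partial x_{n+i}=0$, orthogonal to $\grad_{\vec{x}^{+}}\widetilde{F}$; reading off the obvious basis of that orthogonal complement reproduces \eqref{eq:annihil2}. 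The functionally dependent case is dispatched exactly as in Lemma \ref{lem:annihil}, by writing $\widetilde{G}$ as a function of functionally independent invariants and invoking linearity of $\vec{\widetilde{A}}$.

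The routine computations — the chain rule and the basis of the orthogonal hyperplane — are identical to those of Lemma \ref{lem:annihil}, so the only point genuinely demanding care, and hence the main obstacle, is the bookkeeping of the reflection: one must check that the involution $i\mapsto -i$ carries the admissible operator range, the admissible test-function class, and above all the non-degeneracy condition over correctly, so that Lemma \ref{lem:annihil} really does apply to the reflected data. Once that verification is in place the conclusion is automatic.
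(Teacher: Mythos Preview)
Your proposal is correct and matches the paper's own proof, which consists of the single line ``Analogous to the proof of Lemma \ref{lem:annihil}.'' Your explicit reflection $y_{n+j}=x_{n-j}$ and the verification that the non-degeneracy carries over via $\partial\widetilde{F}/\partial x_{n+k}=\bigl(\partial F/\partial x_{n-k}\bigr)^{-1}$ simply make precise what the paper leaves implicit.
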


\begin{proof}
    Analogous to the proof of Lemma \ref{lem:annihil}.
\end{proof}

\begin{remark}
    Analogously as Lemma \ref{lem:annihil} Lemma \ref{lem:annihil2} 
    generalises immediately to the case when the function $\widetilde{F}$ 
    in \eqref{eq:recgeninv} and $\widetilde{G}$ in \eqref{eq:Gdefinv} 
    depend explicitly on $n$.
    To obtain the non-autonomous version of Lemma \ref{lem:annihil2}
    just replace each occurrence of $\widetilde{F}$ with $\tilde{F}_{n}$ 
    and of $\widetilde{G}$ with $\widetilde{G}_{n}$.
    \\
    Moreover, as for Lemma \ref{lem:annihil}, Lemma \ref{lem:annihil2}
    relies on the fact that the differential operator $\vec{\widetilde{A}}$ 
    in \eqref{eq:gendiffopinv} \emph{annihilates identically} on \emph{all}
    the functions of the form \eqref{eq:Gdefinv}.
    It is not possible to release such hypothesis, as the same counterexample
    presented in Remark \ref{rem:notannihil} is valid.
    \label{rem:annihiln2}
\end{remark}

Then we state the following definition:
\begin{definition}
    The differential operators introduced \eqref{eq:annihil}
    and \eqref{eq:annihil2} are called \emph{annihilation operators}.
    In particular the operators \eqref{eq:annihil} are the
    \emph{forward annihilation operators}, while the operators
    \eqref{eq:annihil2} are the \emph{backward annihilation operators}.
    \label{def:annihil}
\end{definition}

\begin{remark}
    The annihilation operators defined by equation \eqref{eq:annihil}
    and \eqref{eq:annihil2} are the one-dimensional analogous of 
    the operators $Y^{l}$ and $Z^{-l}$, for $l\in\Z$, defined in 
    \cite{LeviYamilov2011,Garifullin2012,GarifullinYamilov2012}.
    These operators have application also in the theory of Darboux
    integrable partial difference equations \cite{AdlerStartsev1999}.
    In \cite{GarifullinYamilov2012,GarifullinYamilov2015,GSY_DarbouxI}
    they were used to find the first integrals of some classes
    of partial difference equations.
    These operators annihilates all the dependent shifts of
    a quad equation, while $\vec{A}^{\pm}_{m}$ annihilates the dependent
    shifts of a scalar difference equation.
    \label{eq:remsym}
\end{remark}

Before going on it is important to note that the condition
$k>1$ in Lemmas \ref{lem:annihil} and \ref{lem:annihil2} cannot be relaxed.
In fact we can prove the following, complementary result:
\begin{lemma}
    Consider a first-order linear differential operator of the form
    \begin{equation}
        \vec{A} = \alpha\left( x_{n},x_{n-1} \right) \pdv{}{x_{n}}+
        \beta\left( x_{n},x_{n-1} \right)\pdv{}{x_{n-1}},
        \label{eq:genopdim2}
    \end{equation}
    such that \emph{for every} function $g=g\left( x_{n+1},x_{n} \right)$
    where $x_{n+1}$ solves a scalar second-order difference equation
    of the form $x_{n+1} = f\left( x_{n},x_{n-1} \right)$, i.e.:
    \begin{equation}
        g=g\left( f\left( x_{n},x_{n-1} \right),x_{n} \right).
        \label{eq:gsmalldef}
    \end{equation}
    we have $\vec{A}\left( g \right)\equiv 0$ identically.
    Then the linear differential operator $\vec{A}$ is trivial, 
    i.e. $\vec{A}\equiv 0$.

    Analogously, consider a first-order differential operator 
    of the form
    \begin{equation}
        \vec{\widetilde{A}} = \tilde{\alpha}\left( x_{n+1},x_{n} \right) \pdv{}{x_{n+1}}+
        \tilde{\beta}\left( x_{n+1},x_{n} \right)\pdv{}{x_{n}}
        \label{eq:genopdim2bis}
    \end{equation}
    such that \emph{for every} function
    $\tilde{g} = \tilde{g}\left( x_{n},x_{n-1} \right)$ 
    where $x_{n-1}$ solves a scalar second-order difference equation
    of the form $x_{n-1} = \tilde{f}\left( x_{n+1},x_{n} \right)$, i.e.:
    \begin{equation}
        \tilde{g}=\tilde{g}\left(x_{n}, \tilde{f}\left( x_{n+1},x_{n} \right) \right).
        \label{eq:gsmalldefinv}
    \end{equation}
    we have $\vec{\widetilde{A}}\left(\tilde{g} \right)\equiv 0$.
    Then the linear differential operator $\vec{\widetilde{A}}\equiv 0$ is trivial, 
    i.e. $\vec{\widetilde{A}}$.
    \label{lem:noannihil}
\end{lemma}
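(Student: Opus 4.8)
The plan is to specialise the computation in the proof of Lemma \ref{lem:annihil} to the degenerate case $k=1$ and to observe that the orthogonal complement which there produced the $k-1$ annihilation operators now collapses to the zero space. Concretely, I would first apply $\vec{A}$ to a generic $g=g\left( f\left( x_{n},x_{n-1} \right),x_{n} \right)$ and expand by the chain rule, grouping the outcome according to the two partial derivatives $\partial g/\partial f$ and $\partial g/\partial x_{n}$. This yields
\begin{equation}
    \vec{A}\left( g \right) =
    \left[ \alpha\,\pdv{f}{x_{n}}+\beta\,\pdv{f}{x_{n-1}} \right]\pdv{g}{f}
    + \alpha\,\pdv{g}{x_{n}}.
    \label{eq:Aappliedsmall}
\end{equation}

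Next, since the hypothesis demands $\vec{A}\left( g \right)\equiv 0$ for \emph{every} admissible $g$, I would reuse the functional-independence argument of Lemma \ref{lem:annihil}: it suffices to test against a single $g$ whose first partial derivatives $\partial g/\partial f$ and $\partial g/\partial x_{n}$ are functionally independent (the generic case). For such a $g$ the two coefficients in \eqref{eq:Aappliedsmall} must vanish separately. The coefficient of $\partial g/\partial x_{n}$ is simply $\alpha$, so $\alpha\equiv 0$ at once; substituting this into the coefficient of $\partial g/\partial f$ leaves $\beta\,\partial f/\partial x_{n-1}=0$, and the non-degeneracy condition \eqref{eq:nondeg}, here reading $\partial f/\partial x_{n-1}\neq 0$, forces $\beta\equiv 0$. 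Hence $\vec{A}\equiv 0$. For the backward operator I would repeat the identical computation with $\tilde{f}$ in place of $f$, collecting instead by $\partial\tilde{g}/\partial\tilde{f}$ and $\partial\tilde{g}/\partial x_{n}$: the coefficient of the latter gives $\tilde{\alpha}\equiv 0$, and non-degeneracy in the form $\partial\tilde{f}/\partial x_{n+1}\neq 0$ then gives $\tilde{\beta}\equiv 0$.

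The conceptual heart of the matter — and the only place where $k=1$ genuinely enters — is a dimension count rather than any hard calculation. In Lemma \ref{lem:annihil} the surviving constraint was the single scalar equation $\sum_{i=1}^{k} A_{-i}\,\partial F/\partial x_{n-i}=0$, cutting out a hyperplane of dimension $k-1$ inside the span of the backward derivatives, whose basis produced the annihilation operators. For $k=1$ there is exactly one backward variable, so this hyperplane degenerates to the lone equation $A_{-1}\,\partial f/\partial x_{n-1}=0$, whose only solution under non-degeneracy is $A_{-1}=0$; the orthogonal complement $\left( \grad_{\vec{x}^{-}}f \right)^{\perp}$ is thus zero-dimensional and no non-trivial annihilator survives.

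I expect no real obstacle here, as the computation is routine once \eqref{eq:Aappliedsmall} is in hand. The one subtlety worth flagging carefully is the same that appeared in Lemma \ref{lem:annihil}: it is precisely the functional independence of the derivatives of $g$ that licenses equating the coefficients to zero, and the ``for every $g$'' hypothesis is exactly what guarantees that such a test function is available. I would state this explicitly so that the reader sees why the conclusion $\vec{A}\equiv 0$ is forced rather than merely holding on a restricted class of functions, mirroring Remark \ref{rem:notannihil}.
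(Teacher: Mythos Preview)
Your approach is essentially identical to the paper's: apply the chain rule to obtain \eqref{eq:Aappliedsmall}, invoke the arbitrariness of $g$ to kill the coefficients separately, and repeat for the backward operator. The added dimension-count remark is a nice gloss not made explicit in the paper but entirely in its spirit.

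One small slip to fix in the backward half: when you expand $\vec{\widetilde{A}}(\tilde{g})$ with $\tilde{g}=\tilde{g}\bigl(x_{n},\tilde{f}(x_{n+1},x_{n})\bigr)$, the coefficient of $\partial\tilde{g}/\partial x_{n}$ is $\tilde{\beta}$, not $\tilde{\alpha}$; the coefficient of $\partial\tilde{g}/\partial\tilde{f}$ is $\tilde{\alpha}\,\partial\tilde{f}/\partial x_{n+1}+\tilde{\beta}\,\partial\tilde{f}/\partial x_{n}$. So you first get $\tilde{\beta}\equiv 0$, and then non-degeneracy $\partial\tilde{f}/\partial x_{n+1}\neq 0$ forces $\tilde{\alpha}\equiv 0$. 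The conclusion is unchanged, but the labels are swapped in your write-up.
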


\begin{proof}
    Applying the operator \eqref{eq:genopdim2} to the
    function $g$ we obtain using the chain rule:
    \begin{equation}
        \begin{aligned}
            \vec{A}\left( g \right)&=
        \alpha \pdv{g}{x_{n}}\left( f\left( x_{n},x_{n-1} \right),x_{n} \right)+
        \beta\pdv{g}{x_{n-1}}\left( f\left( x_{n},x_{n-1} \right),x_{n} \right)
        \\
        &=\left[ \alpha\pdv{f}{x_{n}}+\beta\pdv{f}{x_{n-1}} \right]\pdv{g}{f}
            +\alpha\pdv{g}{x_{n}}.
        \end{aligned}
        \label{eq:appl}
    \end{equation}
    Assuming that the derivatives of $g$ with respect to its arguments
    are independent from the arbitrariness of $g$ we must annihilate 
    their coefficients.
    This implies that $\alpha=\beta\equiv0$.
    When the derivatives are no longer independent we can use the same
    argument as in Lemma \ref{lem:annihil} to reduce to the case when
    they are independent.

    Performing the same reasoning in the in case when $\tilde{g}$ is given
    by \eqref{eq:gsmalldefinv} and the general first-order differential
    operator $\vec{\widetilde{A}}$ \eqref{eq:genopdim2bis} we obtain 
    $\tilde{\alpha}=\tilde{\beta}\equiv0$ and the proof is completed.
\end{proof}

Using the annihilation operators $\vec{A}^{\pm}_{m}$ we can prove
our main result.
As this result relies on the general expression for the Euler--Lagrange
equation \eqref{eq:elgen}, we will formulate it in the non-autonomous
case, when such formula can be expressed explicit as equation
\eqref{eq:elgennonaut}.

\begin{theorem}
    Assume that there exists a discrete Lagrangian $L=L_{n}$ \eqref{eq:lagrgen}
    for equation \eqref{eq:recgen}, where $k>1$ and the right-hand side
    can depend explicitly on $n$, i.e. $F=F_{n}$.
    Then the Lagrangian $L_{n}$ satisfies the following
    system of linear partial differential equations:
    \begin{equation}
        \pdv{}{ x_{n-k}}
        \left\{
            \left(\pdv{F_{n}}{x_{n-k}}\right)^{-1}\vec{A}^{+}_{m}
            \left[\pdv{L_{n-k}}{x_{n}}\left( x_{n},\dots,x_{n-k} \right)\right]
        \right\}
        =0,
        \label{eq:lagrcondsys}
    \end{equation}
    where $m=1,\dots,k-1$.

    Moreover, if the equation \eqref{eq:recgen} can be solved uniquely 
    for $x_{n-k}$ and the evolution in the backward direction
    is given by equation \eqref{eq:recgeninv} where the right-hand side
    can depend explicitly on $n$, i.e. $\widetilde{F}=\widetilde{F}_{n}$
    then the discrete Lagrangian $L_{n}$
    \eqref{eq:lagrgen} satisfies the following
    system of linear partial differential equations:
    \begin{equation}
        \pdv{}{ x_{n+k}}
        \left\{
            \left(\pdv{\widetilde{F}_{n}}{x_{n+k}}\right)^{-1}\vec{A}^{-}_{m}
            \left[\pdv{L_{n}}{x_{n}}\left( x_{n+k},\dots,x_{n} \right)\right]
        \right\}
        =0,
        \label{eq:lagrcondsysinv}
    \end{equation}
    where $m=1,\dots,k-1$.
    %\begin{equation}
    %    \pdv{}{ {x_{n-k}},{x_{n+k-r}} }
    %    \left\{
    %    \left(\pdv{x_{n+k}}{x_{n-k}}\right)^{-1}\vec{n}_{+}\left[\pdv{L}{x_{n}}\left( x_{n},\dots,x_{n-k} \right)\right]
    %    \right\}
    %    =0,.
    %    \label{eq:lagrcondsys}
    %\end{equation}
    \label{thm:elexist}
\end{theorem}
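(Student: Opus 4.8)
The plan is to start from the explicit non-autonomous Euler--Lagrange equation \eqref{eq:elgennonaut} and exploit the hypothesis that it coincides with \eqref{eq:recgen}. Since the Lagrangian $L_{n}$ is normal (Definition \ref{def:normal}), the summand with $l=0$, namely $\partial L_{n}/\partial x_{n}$, is the only one depending on $x_{n+k}$, and its mixed derivative $\partial^{2}L_{n}/\partial x_{n+k}\partial x_{n}$ is nonzero; hence \eqref{eq:elgennonaut} can be solved for $x_{n+k}$, and by coincidence this solution is exactly $x_{n+k}=F_{n}$. Substituting $x_{n+k}=F_{n}$ back into the $l=0$ term therefore turns the Euler--Lagrange equation into an identity in $x_{n+k-1},\dots,x_{n-k}$:
\begin{equation*}
    \pdv{L_{n}}{x_{n}}\left(F_{n},x_{n+k-1},\dots,x_{n}\right)
    +\sum_{l=1}^{k}\pdv{L_{n-l}}{x_{n}}=0.
\end{equation*}

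Next I would apply the forward annihilation operator $\vec{A}^{+}_{m}$ to this identity. The crucial observation is that the first summand is a function of the form $G\left(F_{n},x_{n+k-1},\dots,x_{n}\right)$, depending on the ``past'' variables $x_{n-1},\dots,x_{n-k}$ only through $F_{n}$. Since $\vec{A}^{+}_{m}$ involves only $\partial_{x_{n-k}}$ and $\partial_{x_{n-m}}$ and satisfies $\vec{A}^{+}_{m}(F_{n})=0$ by construction, this summand is annihilated exactly as in Lemma \ref{lem:annihil}. One is thus left with $\vec{A}^{+}_{m}\left[\sum_{l=1}^{k}\partial L_{n-l}/\partial x_{n}\right]=0$.

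The remaining work is bookkeeping of the shifts. Among the summands $\partial L_{n-l}/\partial x_{n}$, which depend on $x_{n+k-l},\dots,x_{n-l}$, only the $l=k$ term depends on $x_{n-k}$, whereas $\partial_{x_{n-m}}$ acts nontrivially precisely on the terms with $l\ge m$. Expanding $\vec{A}^{+}_{m}$ and isolating the distinguished $l=k$ contribution, the previous identity rearranges into
\begin{equation*}
    \left(\pdv{F_{n}}{x_{n-k}}\right)^{-1}\vec{A}^{+}_{m}\left[\pdv{L_{n-k}}{x_{n}}\right]
    =\sum_{l=m}^{k-1}\pdv{L_{n-l}}{x_{n-m},x_{n}}.
\end{equation*}
Every term on the right-hand side has lowest argument $x_{n-l}$ with $l\le k-1$, so none of them depends on $x_{n-k}$; applying $\partial_{x_{n-k}}$ therefore kills the right-hand side and yields \eqref{eq:lagrcondsys}. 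The backward system \eqref{eq:lagrcondsysinv} follows by the mirror argument: one solves the Euler--Lagrange equation for $x_{n-k}$ (now the $l=k$ term is the distinguished one), substitutes $x_{n-k}=\widetilde{F}_{n}$ from \eqref{eq:recgeninv}, and applies $\vec{A}^{-}_{m}$, whose role is symmetric to $\vec{A}^{+}_{m}$ under the exchange of $x_{n+k}$ with $x_{n-k}$.

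I expect the main obstacle to be the first step, namely justifying cleanly that the \emph{coincidence} of the Euler--Lagrange equation with \eqref{eq:recgen} permits the substitution $x_{n+k}=F_{n}$ and produces a genuine identity, since this is precisely where normality of $L_{n}$ is essential. The closing index bookkeeping must also be performed with care, as the result hinges on the exact cancellation of the $l=k$ contribution in the $\partial_{x_{n-m}}$ sum.
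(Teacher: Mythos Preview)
Your proposal is correct and follows essentially the same route as the paper: apply $\vec{A}^{+}_{m}$ to the Euler--Lagrange identity \eqref{eq:elgennonaut} with $x_{n+k}=F_{n}$ substituted, observe that the $l=0$ summand is killed by Lemma~\ref{lem:annihil} while for $1\le l\le k-1$ only the $\partial_{x_{n-m}}$ part of $\vec{A}^{+}_{m}$ survives, and thereby obtain the intermediate relation~\eqref{eq:lagrcond} whose $\partial_{x_{n-k}}$-derivative is \eqref{eq:lagrcondsys}. Your index bookkeeping and the mirror argument for \eqref{eq:lagrcondsysinv} are exactly what the paper does; the only cosmetic difference is that the paper packages the three cases $l<m$, $m\le l\le k-1$, $l=k$ in a single display rather than isolating the $l=0$ term beforehand.
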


\begin{proof}
    Applying the annihilation operator $\vec{A}^{+}_{m}$ \eqref{eq:annihil} 
    to the Euler-Lagrange equation in explicit form \eqref{eq:elgennonaut} 
    we obtain: 
    \begin{equation}
        \sum_{l=0}^{k} \vec{A}^{+}_{m}
        \left[\pdv{L_{n-l}}{x_{n}}\left( x_{n+k-l},x_{n+k-1-l},\dots,x_{n-l} \right)\right] = 0.
        \label{eq:lagrder}
    \end{equation}
    Using the result of Lemma \eqref{lem:annihil}, as $x_{n+k}$ must be 
    evaluated along the solutions of equation \eqref{eq:recgen},
    and the definition of the annihilation operators \eqref{eq:annihil} 
    we have the following result:
    \begin{equation}
        \vec{A}^{+}_{m}\left[\pdv{L_{n-l}}{x_{n}}\left( x_{n+k-l},\dots,x_{n-l} \right)\right]
        =
        \begin{cases}
            0 & \text{if $l<m$},
            \\
            \displaystyle
            -\pdv{F_{n}}{x_{n-k}}\pdv{L_{n-l}}{x_{n},x_{n-m}}\left( x_{n+k-l},\dots,x_{n-l} \right)
            & \text{if $m\leq l \leq k-1$}
            \\
            \displaystyle
            \vec{A}^{+}_{m}\left[\pdv{L_{n-k}}{x_{n}}\left( x_{n},\dots,x_{n-k} \right)\right]
            & \text{if $l=k$}.
        \end{cases}
        \label{eq:opcases}
    \end{equation}
    Introducing this results in \eqref{eq:lagrder} we have:
    \begin{equation}
        \vec{A}^{+}_{m}\left[\pdv{L_{n-k}}{x_{n}}\left( x_{n},\dots,x_{n-k} \right)\right]
        =
        \pdv{F_{n}}{x_{n-k}}
        \sum_{l=m}^{k-1}
        \pdv{L_{n-l}}{ {x_{n-k-l}},{x_{n}} }\left( x_{n+k-l},\dots,x_{n-l} \right).
        \label{eq:lagrcond}
    \end{equation}
    Then dividing \eqref{eq:lagrcond} by $\partial F_{n}/\partial x_{n-k}$
    and differentiating with respect to $x_{n-k}$ we get the system \eqref{eq:lagrcondsys}.

    Reasoning in the same way with \eqref{eq:recgeninv} and the backward
    annihilation operators \eqref{eq:annihil2} we obtain the
    system \eqref{eq:lagrcondsysinv}.
    This concludes the proof.
\end{proof}

\begin{remark}
    We underline that the system \eqref{eq:lagrcondsys} 
    consist of \emph{at least} $k-1$ equations.
    In fact each equation in the system \eqref{eq:lagrcondsys}
    can be complemented by some additional conditions given by
    the fact that $F$ in \eqref{eq:recgen} depends
    on the variables $x_{n+i}$ for $i = -k, \dots, k -1$,
    while $L_{n-k}=L_{n-k}\left(x_{n},\dots,x_{n-k}\right)$.
    Therefore, we must require to each equation in \eqref{eq:lagrcondsys} to
    be independent of additional variables $x_{n+k-r}$, $r = 1, \dots, k -1$.
    This implies that the equations
    \begin{equation}
        \pdv{}{ {x_{n-k}},{x_{n+k-r}} }
        \left\{
            \left(\pdv{F_{n}}{x_{n-k}}\right)^{-1}
            \vec{A}^{+}_{m}\left[\pdv{L_{n-k}}{x_{n}}\left( x_{n},\dots,x_{n-k} \right)\right]
        \right\}
        =0,
        \label{eq:lagrcondsysder}
    \end{equation}
    for $m,r=1,\dots,k-1$ must hold.
    In the simpler, yet usual, case when  $F_{n}$ in \eqref{eq:recgen}
    is a rational function of its argument the $k-1$ conditions
    in \eqref{eq:lagrcondsysder} can be replaced by taking the numerator
    and then the coefficients with respect to to the variables 
    $x_{n+k-r}$ for $r = 1, \dots, k -1$ in equation \eqref{eq:lagrcondsys}.
\label{rem:coeffconf}
\end{remark}

\begin{remark}
    We remark that the result of Theorem \eqref{thm:elexist} do
    not require to the recurrence relation \eqref{eq:recgen} to 
    be defined in both directions, nor it requires to it to be
    rational.
    See Section \ref{sec:examples} for examples of non-rational
    recurrence relations admitting Lagrangians.
    \label{rem:birat}
\end{remark}

Theorem \ref{thm:elexist} is an effective tool to compute
Lagrangians or to prove if a given difference equation
of even order do not possess one.
In fact we can state the following:

\begin{corollary}
    Assume that the general solution of the system of linear partial
    differential equation \eqref{eq:lagrcondsys} (or \eqref{eq:lagrcondsysinv})
    associated with a given forward difference equation \eqref{eq:recgen}
    (or backward difference equation \eqref{eq:recgeninv}), with $k>1$, 
    gives raise to a trivial Lagrangian.
    Then there exists no non-trivial Lagrangian for 
    the forward difference equation \eqref{eq:recgen}
    (or for the backward difference equation \eqref{eq:recgeninv}).
    \label{cor:nolagr}
\end{corollary}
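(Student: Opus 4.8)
The plan is to prove the contrapositive of Corollary~\ref{cor:nolagr}: if a non-trivial Lagrangian exists for the forward difference equation \eqref{eq:recgen}, then the general solution of the system \eqref{eq:lagrcondsys} cannot consist exclusively of trivial Lagrangians. Equivalently, and more directly, I would argue that \emph{every} Lagrangian of the difference equation must be among the solutions of \eqref{eq:lagrcondsys}, so that if the full solution space of \eqref{eq:lagrcondsys} collapses to triviality, no room is left for a genuine Lagrangian. The logical backbone here is Theorem~\ref{thm:elexist}: it asserts that the existence of a discrete Lagrangian $L_{n}$ for \eqref{eq:recgen} \emph{forces} $L_{n}$ to satisfy the system \eqref{eq:lagrcondsys}. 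Thus the set of Lagrangians of \eqref{eq:recgen} is contained in the solution set of \eqref{eq:lagrcondsys}.

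\medskip

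First I would set up the argument by assuming, for contradiction, that the hypothesis of the corollary holds---namely that the general solution of \eqref{eq:lagrcondsys} gives rise only to trivial Lagrangians in the sense of Remark~\ref{rem:lequiv}---and yet a non-trivial Lagrangian $L_{n}$ for \eqref{eq:recgen} exists. By Theorem~\ref{thm:elexist}, this $L_{n}$ is a solution of the linear system \eqref{eq:lagrcondsys}. Since by assumption the general solution of \eqref{eq:lagrcondsys} yields only Lagrangians equivalent to a constant under $\lequiv$, it follows that $L_{n}\lequiv \text{const}$, i.e.\ $L_{n}$ is trivial. This contradicts the assumption that $L_{n}$ is non-trivial, completing the forward case. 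The backward case is entirely symmetric: one replaces \eqref{eq:recgen} by \eqref{eq:recgeninv}, invokes the second half of Theorem~\ref{thm:elexist} giving the system \eqref{eq:lagrcondsysinv}, and repeats the same containment-plus-contradiction reasoning.

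\medskip

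The one subtlety I would be careful about---and which I expect to be the main obstacle---is the precise meaning of the phrase ``the general solution gives rise to a trivial Lagrangian.'' The system \eqref{eq:lagrcondsys} is a \emph{necessary} condition derived by applying the annihilation operators to the Euler--Lagrange equation; its solution space is therefore a priori \emph{larger} than the space of actual Lagrangians, since we have differentiated and discarded information. One must read the hypothesis correctly as saying that parametrising the general solution of \eqref{eq:lagrcondsys} and feeding each candidate back into the requirement that its Euler--Lagrange equation reproduce \eqref{eq:recgen} leaves only trivial candidates. Because the containment ``Lagrangians of \eqref{eq:recgen}'' $\subseteq$ ``solutions of \eqref{eq:lagrcondsys}'' runs in the safe direction for the contrapositive, this subtlety does not break the argument: any genuine Lagrangian is certainly a solution of \eqref{eq:lagrcondsys}, so if even the generous superset produces nothing non-trivial, neither can the subset. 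I would state this containment explicitly to make the logical flow transparent, and then the proof reduces to the short contradiction above.
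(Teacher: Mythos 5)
Your proposal is correct and follows essentially the same route as the paper: assume by contradiction that a non-trivial Lagrangian exists, note via Theorem~\ref{thm:elexist} that it must be a particular solution of the linear system \eqref{eq:lagrcondsys} (resp.\ \eqref{eq:lagrcondsysinv}), and conclude that it is forced to be trivial since the general solution is, yielding a contradiction. Your added remark about the solution set of \eqref{eq:lagrcondsys} being a superset of the actual Lagrangians is a sensible clarification but does not change the argument.
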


\begin{proof}
    Assume by contradiction that equation \eqref{eq:recgen} possess
    a non-trivial Lagrangian $L_{n}$.
    From Theorem \ref{thm:elexist} such Lagrangian will be
    a particular solution of the system \eqref{eq:lagrcondsys}.
    However, since the general solution of the linear system \eqref{eq:lagrcondsys}
    is trivial also $L_{n}$ must be trivial.
    This is a contradiction.
    Reasoning in the same way with \eqref{eq:recgeninv} and the
    system \eqref{eq:lagrcondsysinv} we obtain another contradiction.
    This concludes the proof.
\end{proof}

To practically the problem of finding a discrete Lagrangian $L=L_{n}$
we propose the following algorithmic steps.

\begin{enumerate}
    \item Introduce the function:
        \begin{equation}
            \ell_{n-k}\left( x_{n},\dots,x_{n-k} \right) = 
            \pdv{L_{n-k}}{x_{n}}\left( x_{n},\dots,x_{n-k} \right).
            \label{eq:elldef}
        \end{equation}
    \item Then the system \eqref{eq:lagrcondsys} becomes:
        \begin{equation}
            \pdv{}{ x_{n-k}}
            \left\{
                \left(\pdv{F_{n}}{x_{n-k}}\right)^{-1}\vec{A}^{+}_{m}
                \left[\ell_{n-k}\left( x_{n},\dots,x_{n-k} \right)\right]
            \right\}
            =0,
            \label{eq:lagrcondsysell}
        \end{equation}
        where $m=1,\dots,k-1$.
        That is the system \eqref{eq:lagrcondsys} reduces 
        to a linear system of second-order partial differential 
        equations for $\ell_{n-k}$. 
    \item The next step depends on the functional form of the
        the system \eqref{eq:lagrcondsysell}:
        \begin{enumerate}
            \item If the system \eqref{eq:lagrcondsysell} is rational in 
                $x_{n+k-r}$, $r = 1, \dots, k -1$ take the numerator of each
                equation and then the coefficients with respect to the variables
                $x_{n+k-r}$, $r = 1, \dots, k -1$.
                This gives the final system of linear partial differential
                equations to solve.
            \item If the system \eqref{eq:lagrcondsysell} is not rational
                consider the derivatives of each equation with respect to the
                variables $x_{n+k-r}$, $r = 1, \dots, k -1$:
                \begin{equation}
                    \pdv{}{ {x_{n-k}},{x_{n+k-r}} }
                    \left\{
                        \left(\pdv{F_{n}}{x_{n-k}}\right)^{-1}
                        \vec{A}^{+}_{m}\left[\ell_{n-k}\left( x_{n},\dots,x_{n-k} \right)\right]
                    \right\}
                    =0.
                    \label{eq:lagrcondsysderell}
                \end{equation}
                This gives the final system of linear partial differential
                equations to solve.
        \end{enumerate}
        In both cases solve the resulting final system using, e.g. 
        a computer algebra system like Maple or Mathematica.
    \item Recover the Lagrangian is recovered through integration and translation:
        \begin{equation}
            L_{n}\left( x_{n+k},\dots,x_{n} \right) = 
            \int^{x_{n+k}}\ell_{n}\left( y,x_{n+k-1},\dots,x_{n} \right)\ud y
            + \hat{\ell}_{n}\left( x_{n+k-1},\dots,x_{n} \right).
            \label{eq:ellint}
        \end{equation}
    \item Fix the form of the remaining arbitrary functions using
        the compatibility conditions given by equation \eqref{eq:lagrcond}.
    \item Check if the Euler--Lagrange equation \eqref{eq:elgennonaut} is 
        identically satisfied.
        %the form of the remaining arbitrary functions using
        %the compatibility conditions given by equation \eqref{eq:lagrcond}.
\end{enumerate}

If for technical reasons it is better to use the conditions
\eqref{eq:lagrcondsysinv}, we can perform the steps (i--vi)
using the following function:
\begin{equation}
    \tilde{\ell}_{n}\left( x_{n+k},\dots,x_{n} \right) = 
    \pdv{L_{n}}{x_{n}}\left( x_{n+k},\dots,x_{n} \right).
    \label{eq:elltdef}
\end{equation}
Instead of the system \eqref{eq:lagrcondsysell} we will solve
the analogous system of second-order linear
partial differential equations for $\tilde{\ell}_{n}$ obtained from
\eqref{eq:lagrcondsysinv}.
Again the Lagrangian will be recovered through the integration of 
equation \eqref{eq:elltdef} and the remaining arbitrary functions
will be fixed using the Euler--Lagrange themselves as compatibility
conditions.
However, we underline that, if needed, it is possible to mix the
forward and the backward approach in order to simplify the conditions.
See subsection \ref{sss:Qiii} for an example of such occurrence.

\begin{remark}
    In the case when $k=2$ we have that the systems \eqref{eq:lagrcondsys} 
    and \eqref{eq:lagrcondsysinv}
    actually consist of a \emph{single} equations:
    \begin{subequations}
    \begin{equation}
        \pdv{}{{x_{n-2}}}
    \left\{ 
            \left( \pdv{F_{n}}{x_{n-2}} \right)^{-1}
            \vec{A}^{+}_{1}\left[ \ell_{n-2}\left( x_{n},x_{n-1},x_{n-2} \right) \right] 
        \right\}.
        \label{eq:lagrcond2k2}
    \end{equation}
    and
    \begin{equation}
        \pdv{}{{x_{n+2}}}
    \left\{ 
        \left( \pdv{\widetilde{F}_{n}}{x_{n+2}} \right)^{-1}
        \vec{A}^{-}_{1}\left[ \tilde{\ell_{n}}\left( x_{n+2},x_{n+1},x_{n} \right) \right] 
        \right\}.
        \label{eq:lagrcond2k2inv}
    \end{equation}
        \label{eq:lagracond2k2sys}
    \end{subequations}
    If $F$ is rational to obtain the final system one needs to
    consider the coefficients with respect to $x_{n+1}$, otherwise
    to differentiate with respect to it in \eqref{eq:lagrcond2k2}.
    The same with respect to $\widetilde{F}$ and $x_{n-1}$ in \eqref{eq:lagrcond2k2inv}.
    \label{rem:k2case}
\end{remark}

In the next section we see some examples of the theory we presented above.

\section{Examples}
\label{sec:examples}

In this section we present several examples, both positive and
negative, of the application of the method discussed in the
previous one.

We mainly present autonomous examples.
In all autonomous examples the conditions presented in Theorem
\ref{thm:elexist} hold simply dropping the subscript $n$.
In principle, autonomous systems can have non-autonomous Lagrangians.
In the continuous case several instances of this fact are known
\cite{laggal,CP07Rao1JMP,AndersonThompson1992}.
All the examples of autonomous difference equations presented in 
this paper and known to the authors also have an autonomous Lagrangian.
Therefore, the problem of the finding a non-autonomous 
(up to equivalence)
discrete Lagrangian generating an autonomous difference equation is, 
up to our knowledge, open.
In the last subsection we present a non-autonomous example, 
the $\dP_{\text{I}}^{(2)}$ equation which shows how the method works 
in full generality.

Before going on, we note that in the working of the presented
examples we will employ extensively the computational techniques
and reasoning we introduced for the first time in the proofs 
of Lemmas \ref{lem:zeroder} and \ref{lem:normal}.

\subsection{The autonomous $\dP_{\text{II}}^{(2)}$ equation}

Let us consider the autonomous $\dP_{\text{II}}^{(2)}$ equation
\cite{JoshiViallet2017,CresswellJoshi1999}:
\begin{equation}
    \begin{aligned}
        x_{n+2} (1-x_{n+1}^2)+x_{n-2} (1-x_{n-1}^2)
        &=(x_{n+1}+x_{n-1}) \left[x_{n} (x_{n+1}+x_{n-1})+C\right]
        \\
        &-\frac{A x_{n}+B}{1-x_{n}^2}.
    \end{aligned}
    \label{eq:dPII2}
\end{equation}

We start then from equation \eqref{eq:lagrcond2k2} which,
when $F$ is given by solving \eqref{eq:dPII2} with respect to $x_{n+2}$, is:
\begin{equation}
    \begin{aligned}
    \left[  \left( 2 x_{{n-1}}+2 x_{{n+1}} \right) x_{{n}}+2 x_{{n-2}} x_{{n-1}}+C \right]
    &{\frac {\partial ^{2} \ell}{\partial x_{{n-2}}^{2}}}
    \\
    +\left( 1-x_{{n-1}}^{2} \right) {\frac {\partial ^{2} \ell}{\partial x_{{n-1}}\partial x_{{n-2}}}} 
        +2 x_{{n-1}}&{\frac {\partial \ell}{\partial x_{{n-2}}}} =0,
    \end{aligned}
    \label{eq:ccPII21}
\end{equation}
with $\ell = \ell\left( x_{n},x_{n-1},x_{n-2} \right)$.
Since there is no dependence on $x_{n+1}$ we can take coefficients
with respect and, upon solving with respect to higher order derivatives,
we obtain:
\begin{equation}
    {\frac {\partial ^{2} \ell}{\partial x_{{n-2}}^{2}}}=0,
    \quad
    {\frac {\partial ^{2} \ell}{\partial x_{{n-1}}\partial x_{{n-2}}}}=
    {\frac {2x_{{n-1}}}{{x_{{n-1}}}^{2}-1}}{\frac {\partial \ell}{\partial x_{{n-2}}}}
    \label{eq:ccfPII2}
\end{equation}
The solution of \eqref{eq:ccfPII2} is:
\begin{equation}
    \ell\left( x_{n},x_{n-1},x_{n-2} \right)
    =
    \left( x_{n-1}^{2}-1 \right)x_{n-2}\ell_{1}'\left( x_{n} \right)
    +\pdv{\ell_{2}}{x_{n}}\left( x_{n},x_{n-1} \right).
    \label{eq:ellPII2}
\end{equation}
Applying twice the translation operator in the positive direction
and integrating \eqref{eq:ellPII2} using \eqref{eq:ellint}
we obtain:
\begin{equation}
    \begin{aligned}
        L\left( x_{n+2},x_{n+1},x_{n} \right)
        &=
    \left( x_{n+1}^{2}-1 \right)x_{n}\ell_{1}\left( x_{n+2} \right)
    \\
    &+\ell_{2}\left( x_{n+2},x_{n+1} \right)
    +\ell_{3}\left( x_{n+1},x_{n} \right).
    \end{aligned}
    \label{eq:LdPII}
\end{equation}
Using the arbitrariness of $\ell_{2}\left( x_{n+2},x_{n+1} \right)$ 
we replace it by 
\begin{equation}
    \ell_{2}\left( x_{n+2},x_{n+1} \right)-\ell_{3}\left( x_{n+2},x_{n+1} \right).
\end{equation}
This allows to get rid of a total derivative in \eqref{eq:LdPII}
and we are left with the simplified Lagrangian:
\begin{equation}
    L\left( x_{n+2},x_{n+1},x_{n} \right)
    =
    \left( x_{n+1}^{2}-1 \right)x_{n}\ell_{1}\left( x_{n+2} \right)
    +\ell_{2}\left( x_{n+2},x_{n+1} \right).
    \label{eq:LdPIIbis}
\end{equation}
We must now check that the compatibility condition \eqref{eq:lagrcond}
is satisfied. Inserting \eqref{eq:LdPIIbis} in it we obtain:
\begin{equation}
    {\frac {\partial ^{2} \ell_{2}}{\partial x_{{n-1}}\partial x_{{n}}}}\left( x_{{n}},x_{{n-1}} \right)
    +2 x_{{n}} \ell_{1} \left( x_{{n+1}} \right)
    = 
    \left[  \left( 2 x_{{n-1}}+2 x_{{n+1}} \right) x_{{n}}+C \right] \ell_1' \left( x_{{n}} \right).
    \label{eq:ccPII22}
\end{equation}
Differentiating \eqref{eq:ccPII22} with respect to $x_{n+1}$ we obtain
\begin{equation}
    \ell_{1}'\left( x_{n} \right)=\ell_{1}'\left( x_{n+1} \right).
    \label{eq:ccPII221}
\end{equation}
This last equation implies that $\ell_{1}'\left( x_{n} \right)=K_{1}$ so that
$\ell_{1}\left( x_{1} \right)=K_{1}x_{n}$.
Substituting in \eqref{eq:ccPII22} we have:
\begin{equation}
    {\frac {\partial ^{2} \ell_{2}}{\partial x_{{n-1}}\partial x_{{n}}}}\left( x_{{n}},x_{{n-1}} \right) 
    = 
    K_{1}  \left( 2 x_{{n-1}}x_{{n}}+C \right).
    \label{eq:ccPII222}
\end{equation}
The solution of this last PDE is given by:
\begin{equation}
    \ell_{2}\left( x_{{n}},x_{{n-1}} \right)
    = 
    \frac{K_{1}}{2}x_{n}x_{n-1}\left( x_{n}x_{n-1}+2C \right)
    +\ell_{4}\left( x_{n} \right)
    +\ell_{5}\left( x_{n-1} \right).
    \label{eq:F3dPII2}
\end{equation}
As before we can replace $\ell_{4}\left( x_{n} \right)$ by 
$\ell_{4}\left( x_{n} \right)-\ell_{5}\left( x_{5} \right)$ and
inserting \eqref{eq:F3dPII2} and the value of $\ell_{1}$ 
into \eqref{eq:LdPIIbis} we obtain:
\begin{equation}
    L\left( x_{n+2},x_{n+1},x_{n} \right)
    =
    K_{1}\left[
    \left( x_{n+1}^{2}-1 \right)x_{n} x_{n+2} 
    +\frac{1}{2}x_{n+1}x_{n}\left( x_{n+1}x_{n}+2C \right)\right]
    +\ell_{4}\left( x_{n+1} \right)
    \label{eq:LdPII2}
\end{equation}
where we eliminated the total differences.
Now computing the Euler-Lagrange equations corresponding
to \eqref{eq:LdPII2} on the solutions of \eqref{eq:dPII2} we obtain:
\begin{equation}
    \ell_4' \left( x_{{n}} \right) =
    K_{1} {\frac { Ax_{{n}}+B}{ x_{{n}}^{2}-1}}.
    \label{eq:elccdPII2}
\end{equation}
Solving this last equation with respect to $\ell_{4}\left( x_{n} \right)$ we obtain:
\begin{equation}
    \ell_4(x_{n}) = \frac{K_{1}}{2}\left[(A+B)\log\left(x_{n}-1\right)+(A-B)\log\left(x_{n}+1\right)\right]
    +K_2.
    \label{eq:F5dPII2}
\end{equation}
The arbitrary constant is inessential to the Lagrangian, so
we can safely set $K_{2}=0$.
Finally, inserting \eqref{eq:F5dPII2} into \eqref{eq:LdPII2} 
and rescaling we obtain:
\begin{equation}
    \begin{aligned}
        L\left( x_{n+2},x_{n+1},x_{n} \right)
        &=
        \left( x_{n+1}^{2}-1 \right)x_{n} x_{n+2} 
        +\frac{1}{2}x_{n+1}x_{n}\left( x_{n+1}x_{n}+2C \right)
        \\
        &+ \frac{1}{2}\left[(A+B)\log\left(x_{n+1}-1\right)+(A-B)\log\left(x_{n+1}+1\right)\right].
    \end{aligned}
    \label{eq:LdPII2fin}
\end{equation}
This is the Lagrangian for equation \eqref{eq:dPII2}.

\subsection{The \eqref{eq:Qiii} equation}
\label{sss:Qiii}

Consider the following equation introduced in \cite{GJTV_class}:
\begin{equation}
    \begin{aligned}
    &\phantom{+}\left( x_{{n-2}}x_{{n-1}}^{2}-x_{{n+1}}^{2}x_{{n+2}}-Cx_{{n-1}}+Cx_{{n+1}} \right) x_{{n}}
    \\
    &+ \left(  \frac{C}{2}x_{{n-2}}+B \right) x_{{n-1}}
    - \left( \frac{C}{2} x_{{n+2}}+B \right) x_{{n+1}}
    =\frac{\alpha}{\beta} x_{{n}} \left( x_{{n+1}}-x_{{n-1}} \right).
    \end{aligned}
    \tag{Q.iii}
    \label{eq:Qiii}
\end{equation}
We will show that this equation has no Lagrangian.

We start then from equation \eqref{eq:lagrcond2k2},
which when $x_{n+2}$ is given by \eqref{eq:Qiii} is:
\begin{equation}
    \begin{aligned}
        &\phantom{+}
        \left[  \left( 4 x_{{n}}x_{{n-2}}x_{{n-1}}-2 Cx_{{n}}+Cx_{{n-2}}+2 B \right) \beta+2 \alpha x_{{n}} \right] 
         {\frac {\partial ^{2} \ell}{\partial x_{{n-2}}^{2}}}
         \\
         &- \left[ x_{{n-1}} \left( 2 x_{{n}}x_{{n-1}}+C \right) {
        \frac {\partial ^{2} \ell}{\partial x_{{n-1}}\partial x_{{n-2}}}} 
    - {\frac {\partial \ell}{\partial x_{{n-2}}}}  \left( 4 x_{{n}}x_{{n-1}}+C \right)  \right] \beta=0, 
    \end{aligned}
    \label{eq:ccQiii}
\end{equation}
where $\ell = \ell\left( x_{n},x_{n-1},x_{n-2} \right)$.
It is possible to solve directly equation \eqref{eq:ccQiii}, but the solution
is quite involved.
So instead of starting by solving this equation, we search for another
compatibility condition.
From equation \eqref{eq:lagrcond} differentiating with respect to
$x_{n+1}$ and translating backward once we obtain:
\begin{equation}
    \pdv{\ell}{ x_{n-1},x_{n-2} }=0.
    \label{eq:ccQiii2}
\end{equation}
Solving equations \eqref{eq:ccQiii2} and \eqref{eq:ccQiii} together
we obtain 
\begin{equation}
    \ell\left( x_{n},x_{n-1},x_{n-2} \right)= \ell_{1}\left( x_{n},x_{n-1} \right).
\end{equation}
Using the definition \eqref{eq:ellint} we have that
the possible Lagrangian for equation \eqref{eq:Qiii} is:
\begin{equation}
    L\left( x_{n+2},x_{n+1},x_{n} \right) = 
    \ell_{1}\left( x_{n+2},x_{n+1} \right)
    +\ell_{2}\left( x_{n+1},x_{n} \right).
    \label{eq:LQiii}
\end{equation}
This Lagrangian \eqref{eq:LQiii} is clearly non-normal.
As $\partial L / \partial x_{n+1}\partial x_{n}\neq0$, we have
from Lemma \ref{lem:normal} that Lagrangian \eqref{eq:LQiii} can 
define a difference equation of order two at most.
This is a contradiction, and we obtain that equation \eqref{eq:Qiii} 
does not possess a Lagrangian.
%as $\partial L / \partial x_{n}\equiv0$,
%
%i.e. exploiting the arbitrariness of $\ell_{1}$:
%\begin{equation}
%    L\left( x_{n+2},x_{n+1},x_{n} \right) = 
%    \ell_{1}\left( x_{n+2},x_{n+1} \right).
%    \label{eq:LQiiibis}
%\end{equation}
%so it does not give raise to fourth order Euler-Lagrange equation.
        
\subsection{Examples from \cite{TranvanderKampQusipel2016}}
\label{sub:tranex}

In \cite{TranvanderKampQusipel2016} the Ostrogradsky 
transformation was used to find Poisson structures for periodic reductions
of arbitrary order of four partial difference equations.
%discrete KdV equation, 
%the plus-KdV equation, the discrete Lotka--Volterra equation, 
%and the discrete Liouville equation.
These Poisson structure were found exploiting the Lagrangian
formulation for those partial difference equations. 
Here we will show that these Lagrangians can be derived applying 
Theorem \ref{thm:elexist}.
Due to the algorithmic nature of the process we will present
the details of the derivation in one of the four cases.
All the other examples can be carried out in the same way, 
so we will not discuss them in details.

In \cite{TranvanderKampQusipel2016} the following four Lagrangians
of order $p+q$ are presented:
\begin{subequations}
    \begin{align}
        L_{\text{KdV}} &= x_{n}x_{n+p} - x_{n}x_{n+q} - \log\left( x_{n}-x_{n+p+q} \right),
        \label{eq:Lkdv}
        \\
        L_{\text{plusKdV}} &= x_{n}x_{n+p} + x_{n}x_{n+q} - \log\left( x_{n}+x_{n+p+q} \right),
        \label{eq:Lpluskdv}
        \\
        L_{\text{LV}} &= x_{n}x_{n+p} - x_{n}x_{n+q} - F\left( x_{n}-x_{n+p+q} \right),
        \label{eq:Llv}
        \\
        L_{\text{ASdL}} &= x_{n}x_{n+p} + x_{n}x_{n+q} - F\left( x_{n}+x_{n+p+q} \right),
        \label{eq:Llioville}
    \end{align}
    \label{eq:LTran}
\end{subequations}
where $p,q\in\N$ such that $p<q$ and $\gdc\left( p,q \right)=1$
and the function $F\left( x \right)$ is defined by the following integral:
\begin{equation}
    F\left( x \right) = \int_{0}^{x}\log\left( 1+e^{t} \right)\ud t.
    \label{eq:Fdef}
\end{equation}
These Lagrangian take their names from the fact that each of them
arises as $\left( p,-q \right)$ reduction of the discrete two dimensional
Lagrangian for the discrete KdV equation, the plus-KdV equation, the
discrete Lotka--Volterra equation \cite{LeviYamilov2009} and the 
Adler--Startsev discretization of the Liouville equation
\cite{AdlerStartsev1999} respectively.
The Euler--Lagrange equation of these Lagrangians are respectively:
\begin{subequations}
    \begin{gather}
        x_{n+p}+x_{n-p} -x_{n+q}-x_{n-q} - \frac{1}{x_{n}-x_{n+p+q}}+\frac{1}{x_{n-p-q}-x_{n}}=0,
        \label{eq:ELkdv}
        \\
        x_{n+p}+x_{n-p} -x_{n+q}-x_{n-q} - \frac{1}{x_{n}+x_{n+p+q}}-\frac{1}{x_{n-p-q}+x_{n}}=0,
        \label{eq:ELpluskdv}
        \\
        x_{{n+p}}+x_{{n-p}}-x_{{n+q}}-x_{{n-q}}
        +\log  \left( \frac{1+e^{x_{{n}}-x_{{n+q+p}}}}{1+e^{x_{{n-p-q}}-x_{{n}}}}\right) = 0, 
        \label{eq:ELlv}
        \\
        x_{{n+p}}+x_{{n-p}}+x_{{n+q}}+x_{{n-q}}
        +\log \left[ \left( 1+e^{x_{{n}}+x_{{n+q+p}}}\right)\left( 1+e^{x_{{n-p-q}}+x_{{ n}}} \right)\right] =0.
        \label{eq:ELliouville}
    \end{gather}
    \label{eq:ELTran}
\end{subequations}

We are going to prove the following result:
\begin{lemma}
    All the Lagrangians \eqref{eq:LTran} can be derived using the 
    conditions \eqref{eq:lagrcondsys} and \eqref{eq:lagrcondsysinv} 
    from Theorem \ref{thm:elexist}.
    \label{lem:lagrnihil}
\end{lemma}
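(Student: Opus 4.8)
The plan is to treat the four Lagrangians \eqref{eq:LTran} as instances of the general algorithm (steps (i)--(vi)), observing first that each is a \emph{normal} Lagrangian of order $k=p+q$: the highest shift occurring is $x_{n+p+q}$, and a direct computation gives $\partial^2 L/\partial x_{n+p+q}\partial x_n\neq0$ in every case, so condition \eqref{eq:lagrnormal} holds and the Euler--Lagrange equations \eqref{eq:ELTran} are genuine recurrences of order $2(p+q)$. Because the derivation is purely algorithmic, I would carry out one representative case in full --- say $L_{\text{KdV}}$, governed by \eqref{eq:ELkdv} --- and indicate that the remaining three follow \emph{mutatis mutandis}, as anticipated by Theorem \ref{thm:elexist}, which guarantees that any Lagrangian solves the associated system.

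First I would solve the forward equation \eqref{eq:ELkdv} for the top variable $x_{n+p+q}$ to obtain the right-hand side $F$ of \eqref{eq:recgen}. The decisive structural feature is the \emph{sparsity} of $F$: it depends only on $x_n$, on $x_{n\pm p}$, $x_{n\pm q}$, and on $x_{n-p-q}=x_{n-k}$. Consequently $\partial F/\partial x_{n-m}=0$ for every $m\in\{1,\dots,k-1\}$ except $m=p$ and $m=q$. Feeding this into the forward annihilation operators \eqref{eq:annihil}, all but two of them collapse, after division by $\partial F/\partial x_{n-k}$, to a single partial derivative $\partial/\partial x_{n-m}$. The system \eqref{eq:lagrcondsysell} for $\ell_{n-k}=\partial L_{n-k}/\partial x_n$ therefore decouples into the nearly diagonal conditions $\partial^2\ell_{n-k}/\partial x_{n-k}\partial x_{n-m}=0$ for $m\neq p,q$, together with two genuinely coupled equations arising from $m=p,q$. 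This overdetermined linear system pins down the functional form of $\ell_{n-k}$ with very little computation, forcing it to reproduce the contribution of the $\log$ term alone.

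Having solved for $\ell_{n-k}$, I would recover the Lagrangian by the integration formula \eqref{eq:ellint} and fix the residual arbitrary functions of fewer variables --- which must carry the quadratic cross-terms $x_nx_{n+p}-x_nx_{n+q}$ --- by imposing the compatibility condition \eqref{eq:lagrcond} and, finally, by verifying that \eqref{eq:elgennonaut} reproduces \eqref{eq:ELkdv} exactly (steps (v)--(vi)), exactly as in the $\dP_{\text{II}}^{(2)}$ example. Throughout, the outcome is determined only up to the equivalence $\lequiv$ of Definition \ref{def:equivlagr}, so I would discard total-difference terms to match \eqref{eq:Lkdv} on the nose. For $L_{\text{KdV}}$ and $L_{\text{plusKdV}}$ the function $F$ is rational, so the passage to a final linear system proceeds by taking numerators and collecting coefficients in the variables $x_{n+k-r}$ (the rational branch (iii)(a)).

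The main obstacle is the pair $L_{\text{LV}}$ and $L_{\text{ASdL}}$, whose Euler--Lagrange equations \eqref{eq:ELlv}--\eqref{eq:ELliouville} are transcendental: solving for $x_{n+p+q}$ yields a non-rational $F$, so one must use the differentiation branch (iii)(b), or else switch to the backward conditions \eqref{eq:lagrcondsysinv} via \eqref{eq:elltdef} and mix the two directions as in subsection \ref{sss:Qiii}. The delicate point is the final integration in step (v): the arbitrary single-variable function there satisfies a first-order ODE whose right-hand side is $\log(1+e^{\,\cdot\,})$, and one must recognise its antiderivative as precisely the function $F$ defined in \eqref{eq:Fdef}. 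Apart from this identification and the bookkeeping of total-difference terms, no essential difficulty arises, which is exactly why the four cases can be handled uniformly and why it suffices to present one in detail.
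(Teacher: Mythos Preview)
Your proposal is essentially the paper's own approach: work out the KdV case in detail, exploit the sparsity of $F$ so that the annihilation operators for $m\neq p,q$ reduce to $\partial^{2}\ell/\partial x_{n-m}\partial x_{n-p-q}=0$, solve the two genuine coupled conditions for $m=p,q$, then fix the residual arbitrary functions by the compatibility condition \eqref{eq:lagrcond}, and finally relegate the other three cases to the same template. The only point where you are slightly too optimistic is the claim that the system \eqref{eq:lagrcondsysell} already ``forces'' the $\log$ contribution: in the paper the solution of the $m=p,q$ equations still carries an arbitrary function of \emph{two} variables inside an integral (see \eqref{eq:ellkdvsol}--\eqref{eq:Lkdvsol}), and reducing it to the $\log$ requires several further rounds of compatibility plus a linear ansatz on $L_{1}$ (or, equivalently, an appeal to the backward operators \eqref{eq:annihil2}); you should budget for this extra step rather than expect the $\log$ to drop out immediately.
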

\begin{proof}
    In the case of equation \eqref{eq:ELkdv} the forward annihilation 
    operators \eqref{eq:annihil} have the following expression:
    \begin{subequations}
        \begin{align}
            \vec{A}_{m}^{+} &= \frac{1}{D^{2}_{\text{KdV}}}
            \pdv{}{x_{n-m}},
            \quad m\neq p,q,
            \label{eq:Amgenkdv}
            \\
            \vec{A}_{p}^{+} &= \frac{1}{D^{2}_{\text{KdV}}}
            \left[\pdv{}{x_{n-p}}+ \left( x_{n}-x_{n-p-q} \right)^{2}\pdv{}{x_{n-p-q}}  \right],
            \label{eq:Ampkdv}
            \\
            \vec{A}_{q}^{+} &= \frac{1}{D^{2}_{\text{KdV}}} 
            \left[\pdv{}{x_{n-p}}-\left( x_{n}-x_{n-p-q} \right)^{2}\pdv{}{x_{n-p-q}}  \right],
            \label{eq:Amqkdv}
        \end{align}
        \label{eq:annihilopkdv}
    \end{subequations}
    where:
    \begin{equation}
        D_{\text{KdV}} = 1-\left( x_{n}-x_{n-p-q} \right)\left(x_{n+p}+x_{n-p}-x_{n+q}-x_{n-q}\right).
        \label{eq:Ddef}
    \end{equation}
    Using these operators we have that the differential
    conditions on the Lagrangian \eqref{eq:lagrcondsys} are given by:
    \begin{subequations}
        \begin{gather}
            \pdv{\ell}{x_{n-m},x_{n-p-q}} =0, \quad m\neq p,q,
            \label{eq:annkdvmgen}
            \\
            \begin{aligned}
            \left( x_{{n}}-x_{{n-q-p}} \right) ^{2}{\frac {\partial ^{2} \ell}{\partial x_{{n-q-p}}^{2}}}
            &+{\frac {\partial ^{2} \ell}{\partial x_{{n-q-p}}\partial x_{{n-p}}}}
            \\ 
            &+ 2\left( x_{{n-q-p}}-x_{{n}} \right) {\frac {\partial \ell}{\partial x_{{n-q-p}}}} =0,
            \end{aligned}
            \quad m=p,
            \label{eq:annkdvp}
            \\
            \begin{aligned}
            \left( x_{{n}}-x_{{n-q-p}} \right) ^{2}{\frac {\partial ^{2} \ell}{\partial x_{{n-q-p}}^{2}}}
            &-{\frac {\partial ^{2} \ell}{\partial x_{{n-q-p}}\partial x_{{n-q}}}}
            \\
            &+ 2\left( x_{{n-q-p}}-x_{{n}} \right) {\frac {\partial \ell}{\partial x_{{n-q-p}}}} =0,
            \end{aligned}
            \quad m=q.
            \label{eq:annkdvq}
        \end{gather}
        \label{eq:annkdv}
    \end{subequations}
    %It is easy to see that the function:
    %\begin{equation}
    %    \ell_{\text{KdV}} := \pdv{L_{\text{KdV}}}{x_{n}}\left( x_{n},\dots,x_{n-p-q} \right)
    %    =x_{n-p}-x_{n-q}+\frac{1}{x_{n-q-p}-x_{n}}
    %    \label{eq:ellkdv}
    %\end{equation}
    %satisfies the system \eqref{eq:annkdv}. 
    %So the Lagrangian obtained in \cite{TranvanderKampQusipel2016} agrees with
    %the conditions posed in Theorem \ref{thm:elexist}.
    %In fact we can prove the following

    Now we prove that solving the conditions \eqref{eq:annkdv} we find
    the Lagrangian \eqref{eq:Lkdv}.
    First of all, we note that the equation \eqref{eq:ELkdv}
    depends only on the seven points $x_{n}$, $x_{n\pm p}$, $x_{n\pm q}$ and
    $x_{n\pm\left( p+q \right)}$.
    Therefore, we can make the simplifying assumption:
    \begin{equation}
        L = L \left( x_{n+p+q},x_{n+q},x_{n+p},x_{n} \right).
        \label{eq:Lkdvsimp}
    \end{equation}
    With this assumption the first set of equation \eqref{eq:annkdvmgen}
    is identically satisfied.
    Then the solution of equations \eqref{eq:annkdvp} and \eqref{eq:annkdvq}
    is given by:
    \begin{equation}
        \ell
        =
        \ell_{2}(x_{n-q}, x_{n-p}, x_{n})
        +\int
        \frac{\ell_{1}
        \left(x_{n}, x_{n-q}+\dfrac{1+(x_{n-p-q}-x_{n})x_{n-p}}{x_{n}-x_{n-p-q}}\right)}{(x_{n-p-q}-x_{n})^2}\ud x_{n-p-q}.
        \label{eq:ellkdvsol}
    \end{equation}
    Following the definition \eqref{eq:elldef} we obtain that a
    Lagrangian for equation \eqref{eq:ELkdv} must have the following
    form:
    \begin{equation}
        \begin{aligned}
            L(x_{n+p+q}, x_{n+q}, x_{n+p}, x_{n})
            &=
        \ell_{2}(x_{n+p}, x_{n+q}, x_{n})
        \\
        &+\iint\frac{\ell_{1}
        \left(x_{n+p+q}, x_{n+p}+\dfrac{1+(x_{n}-x_{n+p+q})x_{n+q}}{x_{n+p+q}-x_{n}}\right)}{(x_{n}-x_{n+p+q})^2}
        \ud x_{n}\ud x_{n+p+q}.
        \end{aligned}
        \label{eq:Lkdvsol}
    \end{equation}
    Here we used the arbitrariness of $\ell_{2}$ to change it to
    $\partial\ell_{2}/\partial x_{n}$ in order to keep the expression
    for $L$ as simple as possible.
    Deriving the Euler--Lagrange equations of the Lagrangian
    \eqref{eq:Lkdvsol}, applying the operator $\vec{A}_{p}^{+}$ \eqref{eq:Ampkdv}
    then differentiating with respect to $x_{n+p}$ we obtain:
    \begin{equation}
        \pdv{\ell_{1}}{\xi} = 0, 
        \quad 
        \xi = x_{n+p}+\frac{1+(x_{n}-x_{n+p+q})x_{n+q}}{x_{n+p+q}-x_{n}}.
        \label{eq:l1cond}
    \end{equation}
    This greatly simplifies the expression in \eqref{eq:Lkdvsol} to:
    \begin{equation}
        L(x_{n+p+q}, x_{n+q}, x_{n+p}, x_{n})
        =
        \ell_{2}(x_{n+p}, x_{n+q}, x_{n})
        -\int\frac{\ell_{1}\left(x_{n+p+q}\right)}{x_{n}-x_{n+p+q}}\ud x_{n+p+q}.
        \label{eq:Lkdvsol2}
    \end{equation}
    Computing the Euler--Lagrange equations of \eqref{eq:Lkdvsol2} then
    applying the operators $\vec{A}_{p}^{+}$ \eqref{eq:Ampkdv} 
    and $\vec{A}_{q}^{+}$ \eqref{eq:Amqkdv} we obtain
    that the function $\ell_{2}$ must satisfy the following system of
    PDEs:
    \begin{subequations}
        \begin{align}
        {\frac {\partial ^{2}\ell_{2} }{\partial x_{{n-p}}\partial x_{{n}}}}\left( x_{{n-q}},x_{{n-p}},x_{{n}} \right)
        &=-\ell_1 \left( x_{{n}}\right),
        \\
        {\frac {\partial ^{2} \ell_{2}}{\partial x_{{n-q}}\partial x_{{n}}}} \left( x_{{n-q}},x_{{n-p}},x_{{n}} \right) 
        &=\ell_{1} \left( x_{{n}}\right).
        \end{align}
        \label{eq:l2cond}
    \end{subequations}
    Introducing $L_{1}\left( x_{n} \right) = \int \ell_{1}\left( x_{n} \right)\ud x_{n}$
    and solving \eqref{eq:l2cond} 
    we obtain the following form for the Lagrangian of \eqref{eq:ELkdv}:
    \begin{equation}
        \begin{aligned}
        L(x_{n+p+q}, x_{n+q}, x_{n+p}, x_{n})
        &=
        \ell_3\left(x_{n+p}, x_{n+q}\right)
        +\left(x_{n+p}-x_{n+q}\right) L_1(x_{n})
        \\
        &+\ell_4(x_{n+p+q})+\ell_{5}(x_{n+q})
        -\int\frac{L_{1}'\left(x_{n+p+q}\right)}{x_{n}-x_{n+p+q}}\ud x_{n+p+q}.
        \end{aligned}
        \label{eq:Lkdvsol3}
    \end{equation}
    From the arbitrariness of $\ell_{4}\left( x_{n+p+q} \right)$
    we can write 
    \begin{equation}
        \ell_{4}\left( x_{n+p+q} \right)=
        \tilde{\ell}_{4}\left( x_{n+p+q} \right)-\ell_{5}\left( x_{n+p+q} \right)
    \end{equation}
    and since
    \begin{equation}
        \ell_{5}\left( x_{n+p+q} \right) - \ell_{5}\left( x_{n+p} \right)
        = \sum_{l=0}^{q}\left[ \ell_{5}\left( x_{n+p+l+1} \right) - \ell_{5}\left( x_{n+p+l} \right) \right]
        \lequiv 0,
        \label{eq:ell45td}
    \end{equation}
    we have the following simplification in \eqref{eq:Lkdvsol3}:
    \begin{equation}
        \begin{aligned}
        L(x_{n+p+q}, x_{n+q}, x_{n+p}, x_{n})
        &=
        \ell_3\left(x_{n+p}, x_{n+q}\right)
        +\left(x_{n+p}-x_{n+q}\right) L_1(x_{n})
        \\
        &+\ell_4(x_{n+p+q})
        -\int\frac{L_{1}'\left(x_{n+p+q}\right)}{x_{n}-x_{n+p+q}}\ud x_{n+p+q}.
        \end{aligned}
        \label{eq:Lkdvsol4}
    \end{equation}
    In \eqref{eq:Lkdvsol4} for sake of simplicity we dropped the tilde 
    in $\ell_{4}\left( x_{n+p+q} \right)$.
    From the Lagrangian \eqref{eq:Lkdvsol4} we obtain the following
    Euler--Lagrange equation:
    \begin{equation}
        \begin{aligned}
        &\phantom{+}\int 
        \frac{L_{1}' \left( x_{{n+q+p}} \right) }{\left( x_{{n}}-x_{{n+q+p}} \right) ^{2}}
        \ud x_{{n+q+p}}
        +{\frac {\partial \ell_{3}}{\partial x_{{n}}}}\left( x_{{n+p-q}},x_{{n}} \right) 
        -L_{1} \left( x_{{n+p}} \right) 
        \\
        &+{\frac {\partial \ell_{3}}{\partial x_{{n}}}}\left( x_{{n}},x_{{n+q-p}} \right) 
        +L_{1} \left( x_{{n+q}} \right) 
        + \left( x_{{n-q}}-x_{{n-p}} \right) L_{1}' \left( x_{{n}} \right) 
        \\
        &+\ell_4' \left( x_{{n}} \right) 
        +\frac{ L_1' \left( x_{{n}} \right)}{ x_{{n}}-x_{{n-q-p}}}=0.
        \end{aligned}
        \label{eq:ELkdv4}
    \end{equation}
    Confronting \eqref{eq:ELkdv4} with \eqref{eq:ELkdv} it is natural
    to make the linear \emph{ansatz} for $L_{1}\left( x_{n} \right)$, i.e.
    $L_{1}\left( x_{n} \right)=K x_{n}$ with $K$ a constant\footnote{Alternatively,
    without making ansatz it is possible to solve \eqref{eq:ELkdv} for
    $x_{n-p-q}$ and apply the backward annihilation operators \eqref{eq:annihil2}.}.
    Using this ansatz the Euler--Lagrange equation \eqref{eq:ELkdv4}
    becomes:
    \begin{equation}
        \begin{aligned}
        &\phantom{+}
        \frac{K}{x_{{n}}-x_{{n+q+p}}}
        +{\frac {\partial \ell_{3}}{\partial x_{{n}}}}\left( x_{{n+p-q}},x_{{n}} \right) 
        +{\frac {\partial \ell_{3}}{\partial x_{{n}}}}\left( x_{{n}},x_{{n+q-p}} \right)
        \\
        &+ K\left(x_{{n+q}}+ x_{{n-q}}-x_{n+p} -x_{{n-p}} \right)
        +\ell_4' \left( x_{{n}} \right) 
        +\frac{K}{ x_{{n}}-x_{{n-q-p}}}=0.
        \end{aligned}
        \label{eq:ELkdv4bis}
    \end{equation}
    Since $x_{n+p+q}$ do not depend on $x_{n+p-q}$ and $x_{n+q-p}$
    we have that $\partial^{2} \ell_{3}/\partial x_{n+p}\partial x_{n+q}=0$,
    i.e.
    \begin{equation}
        \ell_{3}\left( x_{n+p},x_{n+q} \right)=
        \ell_{6}\left( x_{n+p} \right)+\ell_{7}\left( x_{n+q} \right).
    \end{equation}
    However, exploiting again the arbitrarily of $\ell_{4}\left( x_{n+p+q} \right)$
    we can write 
    \begin{equation}
        \ell_{4}\left( x_{n+p+q} \right)=
        \tilde{\ell}_{4}\left( x_{n+p+q} \right)
        -\ell_{6}\left( x_{n+p+q} \right)-\ell_{7}\left( x_{n+p+q} \right).
    \end{equation}
    Then from a reasoning analogous to the one in formula
    \eqref{eq:ell45td} we can remove all the terms in $\ell_{6}$ and
    in $\ell_{7}$.
    Therefore we are left with the following Euler--Lagrange equations:
    \begin{equation}
        \frac{K}{x_{{n}}-x_{{n+q+p}}}
        + K\left(x_{{n+q}}+ x_{{n-q}}-x_{n+p} -x_{{n-p}} \right)
        +\ell_4' \left( x_{{n}} \right) 
        +\frac{K}{ x_{{n}}-x_{{n-q-p}}}=0.
        \label{eq:ELkdv4ter}
    \end{equation}
    Confronting with \eqref{eq:ELkdv} we have $\ell_{4}'\left( x_{n} \right)=0$
    so that $\ell_{4}\left( x_{n} \right) = \text{constant}$.
    Since constants are inessentials in Lagrangian we can safely put
    $\ell_{4}\left( x_{n} \right)=0$.
    Putting these considerations together we obtained the Lagrangian:
    \begin{equation}
        L(x_{n+p+q}, x_{n+q}, x_{n+p}, x_{n})
        =K\left(x_{n+p}-x_{n+q}\right) x_{n}
        +K\log\left(x_{n}-x_{n+p+q}\right),
        \label{eq:Lkdvsol5}
    \end{equation}
    which is equivalent to \eqref{eq:Lkdv} if we choose $K=-1$.

    In \ref{app:lemmaproof} we present the conditions 
    \eqref{eq:lagrcondsys} for the other three equations in \eqref{eq:ELTran}.
    The proof that the also the other three Lagrangians in \eqref{eq:LTran}
    can be obtained solving these conditions reported in \ref{app:lemmaproof}
    runs in the same way and therefore it is omitted.
\end{proof}

\subsection{The $\dP_{\text{I}}^{(2)}$ equation}
\label{sub:dPI2}

In this subsection we consider the following non-autonomous
equation:
\begin{equation}
    \begin{aligned}
        &x_{{n}}\left(x_{{n+1}}x_{{n+2}}+x_{{n-1}}x_{{n-2}}\right)
        +x_{n}x_{{n-1}}x_{{n+1}}
        \\
        +&2 x_{{n}}^{2}\left( x_{{n+1}}+x_{{n+1}}\right)
        +x_{{n}} \left( x_{{n+1}}^{2}+x_{{n}}^{2}+x_{{n-1}}^{2} \right) 
        \\
        +&c_{{3}}x_{{n}} \left( x_{{n-1}}+x_{{n}}+x_{{n+1}} \right) 
        +c_{{2}}x_{{n}}=c_{{1}}+c_{{0}} \left( -1 \right) ^{n}-n.
    \end{aligned}
    \label{eq:dPI2}
\end{equation}
This equation is the second member of the Painlev\'e I hierarchy,
in short $\dP_{\text{I}}^{(2)}$, and was first derived
in \cite{CresswellJoshi1999}.
We will now present a derivation of a discrete Lagrangian for
such equation.

%In this case we need to take into account the explicit dependence
%on the discrete variable $n$. Therefore, we write down the Lagrangian
%as:
%\begin{equation}
%    L_{n} = L_{n}\left( x_{n+2},x_{n+1},x_{n} \right)
%    \label{eq:LndPII}
%\end{equation}
%and:
%\begin{equation}
%    \ell_{n}\left( x_{n+2},x_{n+1},x_{n} \right)=
%    \pdv{L_{n}}{x_{n}}\left( x_{n+2},x_{n+1},x_{n} \right).
%    \label{eq:ellndPII}
%\end{equation}
%The translation acts on the functions $L_{n}$ and $\ell_{n}$
%translating both the arguments and the index $n$, e.g.:
%\begin{equation}
%    T_{n}L_{n} =L_{n-1}\left( x_{n+1},x_{n},x_{n-1} \right) 
%    \label{eq:LnmdPII}
%\end{equation}

First, we consider equation \eqref{eq:lagrcond2k2},
which when $F_{n}$ is given by solving \eqref{eq:dPI2} with respect to
$x_{n+2}$ is the following:
\begin{equation}
    \begin{aligned}
        \left[ 2 \left(x_{{n}}+x_{{n-1}}\right)+x_{{n+1}}+c_{{3}}+x_{{n-2}} \right]
        &{\frac {\partial ^{2}\ell_{n-2}}{\partial x_{{n-2}}^{2}}}
        \\
    -x_{{n-1}}{\frac {\partial ^{2}\ell}{\partial x_{{n-1}}\partial x_{{n-2}}}}
    +&{\frac {\partial \ell_{n-2}}{\partial x_{{n-2}}}}=0.
    \end{aligned}
    \label{eq:a1dPI2}
\end{equation}
Taking the coefficients with respect to $x_{n+1}$ in \eqref{eq:a1dPI2}
and solving the resulting system of partial differential equations
we obtain that $\ell_{n-2}$ has the following form:
\begin{equation}
    \ell_{n-2} = x_{n-2}x_{n-1}\ell_{1,n-2}(x_{n})+\ell_{2,n-2}(x_{n}, x_{n-1}).
    \label{eq:elldPI2}
\end{equation}
Using the definition \eqref{eq:elldef} and exploiting the
arbitrariness of $\ell_{2,n-2}$ we obtain the following expression
for a possible Lagrangian of equation \eqref{eq:dPI2}:
\begin{equation}
    L_{n}\left( x_{n+2},x_{n+1},x_{n} \right) = x_{n}x_{n+1}\ell_{1,n}(x_{n+2})
    +\ell_{2,n}(x_{n+2}, x_{n+1}).
    \label{eq:L1dPI2}
\end{equation}
Computing the Euler--Lagrange equations of \eqref{eq:L1dPI2} 
then applying the operator $\vec{A}_{1}^{+}$ we obtain the following
compatibility condition:
\begin{equation}
    {\frac {\partial ^{2} \ell_{2,n-2}}{\partial x_{{n-1}}\partial x_{{n}}}} 
    \left( x_{{n}},x_{{n-1}} \right)
    =\left( c_{{3}}+2 x_{{n}}+2 x_{{n-1}}+x_{{n+1}} \right) \ell_{1,n-2}' \left( x_{{n}} \right) 
    -\ell_{1,n-1} \left( x_{{n+1}} \right)
    \label{eq:cc1dPI2}
\end{equation}
Differentiating \eqref{eq:cc1dPI2} with respect to $x_{n+1}$
we obtain:
\begin{equation}
    \ell_{1,n-1}'\left( x_{n+1} \right) = \ell_{1,n-2}'\left( x_{n} \right).
    \label{eq:cc1dxpdPI2}
\end{equation}
We have that the functional equation \eqref{eq:cc1dxpdPI2} implies
$\ell_{1,n-2}\left( x_{n} \right)=K_{n-2}x_{n}$ where $K_{n-2}$
is a function depending explicitly on $n$.
Inserting such value of $\ell_{1,n-2}\left( x_{n} \right)$
into \eqref{eq:cc1dxpdPI2} we have $K_{n-1}=K_{n-2}$, i.e.
$K_{n}=K$ a constant for all $n\in\Z$.
So in the end we have $\ell_{1,n-2}\left( x_{n} \right) = K x_{n}$.
Substituting this value of $\ell_{1,n-2}\left( x_{n} \right)$ in equation
\eqref{eq:cc1dPI2} we and solving the obtained partial difference
equation with respect to $\ell_{2,n-2}$ we have:
\begin{equation}
    \ell_{2,n-2}\left( x_{{n}},x_{{n-1}} \right) =
    Kx_{{n}}x_{{n-1}} \left( c_{{3}}+x_{{n}}+x_{{n-1}} \right)
    +\ell_{3,n-2} \left( x_{{n}} \right) +\ell_{4,n-2} \left( x_{{n-1}} \right).
    \label{eq:ell2soldPI2}
\end{equation}
Now due to the arbitrariness of $\ell_{3,n-2}\left( x_{n} \right)$ we
can write 
\begin{equation}
    \ell_{3,n-2}\left( x_{n} \right)=\tilde{\ell}_{3,n-2}\left( x_{n} \right)
    -\ell_{4,n-1}\left( x_{n} \right)
\end{equation}
and, since $\ell_{4,n-1}\left( x_{n} \right)-\ell_{4,n-2}\left( x_{n-1} \right)\lequiv0$,
we obtain the following Lagrangian:
\begin{equation}
    L_{n}\left( x_{n+2},x_{n+1},x_{n} \right) = 
    K x_{n}x_{n+1}x_{n+2}
    +K x_{{n+2}}x_{{n+1}} \left( c_{{3}}+x_{{n+2}}+x_{{n+1}} \right)
    +\ell_{3,n} \left( x_{{n+2}} \right),
    \label{eq:L2dPI2}
\end{equation}
where for sake of simplicity we dropped the tilde in $\ell_{3,n}$.
Computing the Euler--Lagrange equations of \eqref{eq:L2dPI2} and
substituting the value of $x_{n+2}$ from \eqref{eq:dPI2} we
are left with the following condition:
\begin{equation}
    x_{n}\ell_{3,n-2}'\left( x_{n} \right)=
    K \left[ c_{{3}}x_{{n}}^{2}+x_{{n}}^{3}-c_{{0}} \left( -1\right) ^{n}+c_{{2}}x_{{n}}+n-c_{{1}} \right]. 
    \label{eq:cc3dPI2}
\end{equation}
Integrating equation \eqref{eq:cc3dPI2} and rescaling
to eliminate the inessential constant $K$
we obtain the following Lagrangian for equation \eqref{eq:dPI2}:
\begin{equation}
    \begin{aligned}
        L_{n}\left( x_{n+2},x_{n+1},x_{n} \right) &= 
    x_{n}x_{n+1}x_{n+2}
    +x_{{n+2}}x_{{n+1}} \left( c_{{3}}+x_{{n+2}}+x_{{n+1}} \right)
    \\
    &+\frac{x_{{n+2}}^{3}}{3}+\frac{c_{{3}}x_{{n+2}}^{2}}{2}+c_{{2}}x_{{n+2}}
    \\
    &+ \left[ n+2 -c_{0}\left( -1 \right) ^{n} -c_{{1}} \right] \log  x_{{n+2}}. 
    \end{aligned}
    \label{eq:L3dPI2}
\end{equation}
This ends the proof.

\section{Conclusions}
\label{sec:concl}

In this paper we discussed the conditions for the existence
of a discrete Lagrangian in the case of scalar difference equations
of arbitrary even order $2k$ with $k>1$.
Our main result is contained in
Theorem \ref{thm:elexist} and Corollary \ref{cor:nolagr}
which gives a way of computing such discrete Lagrangians.

The usefulness of the method presented in this paper extends also
to differential equations.
Indeed, from a difference equation of order $2k$ through the so-called 
\emph{continuum limit} procedure we can always obtain a differential equation of
order lesser or equal than $2k$.
Applying the same continuum limit to the discrete Lagrangian of such
difference equation we obtain a continuous Lagrangian for the corresponding
differential equation.
%The continuum limit is then a useful technique to produce continuous
%Lagrangians from discrete one.
%In the cases when the Lagrangian is unique this also ensures that
%the discovered Lagrangian is the only possible one.
To give an example of this occurrences, let us consider the 
autonomous $\dP_{\text{II}}^{(2)}$ equation \eqref{eq:dPII2}.
It was proved  in \cite{CresswellJoshi1999} that the 
autonomous $\dP_{\text{II}}^{(2)}$ equation \eqref{eq:dPII2}
under the following scaling
\begin{subequations}
    \begin{gather}
        x_{n} = h u\left( t \right), \quad t = n h,
        \\
        A= 6  +2\alpha h^2+\delta h^4, \,
        B= \beta h^{5}, \,
        C= 4+\alpha h^{2},
    \end{gather}
    \label{eq:PII2scaling}
\end{subequations}
in the limit $h\to0$ reduces to the following differential
equation:
\begin{equation}
    u\lagrangeprime{4}    
    - \alpha u\lagrangeprime{2}  
    -10 u^{2} u\lagrangeprime{2} 
    + \left[ \delta-10  \left( u\lagrangeprime{1}  \right)^{2} \right] u
    +6u^{5}+2 \alpha u^{3}
    +\beta=0. 
    \label{eq:PII2}
\end{equation}
This equation is a translated version of the fourth-order member of
the $P_{\text{II}}$ hierarchy, the $P_{\text{II}}^{(2)}$ equation, introduced
in \cite{FlaschkaNewell1980}.
Applying the scaling \eqref{eq:PII2scaling} to the discrete Lagrangian
\eqref{eq:LdPII2fin} in the limit $h\to0$ we obtain: 
\begin{equation}
    \begin{aligned}
        \frac{L}{h^{6}} &= 
        \frac{1}{2} \left( u\lagrangeprime{2} \right) ^{2}
        -\frac{1}{2} u\left(u^{2}-\alpha \right)  u\lagrangeprime{2}
        +\left( \alpha+\frac{7}{2}u^{2} \right)\left(u\lagrangeprime{1}  \right)^{2}
        \\
        &+\frac{1}{2} u  \left( 2 u^{5}+ u^{3}\alpha+\delta u +2 \beta \right) 
        +O\left( h \right),
    \end{aligned}
    \label{eq:LagrPII2}
\end{equation}
up to the addition of a total derivative.
The leading order term in \eqref{eq:LagrPII2} is
a Lagrangian for the $P_{\text{II}}^{(2)}$ equation \eqref{eq:PII2}.
As the $P_{\text{II}}^{(2)}$ equation \eqref{eq:PII2} is a fourth-order
equations we obtain that the Lagrangian in \eqref{eq:LagrPII2}
is then its unique Lagrangian \cite{Fels1996}.

On the other hand the converse procedure,
i.e. going from continuous Lagrangian equations to discrete ones,
is not possible in general.
This happens because, in general, discretising
a differential equation does not preserves its variational structure.
The non-trivial problem of finding discretisation with variational structure
is part of the so-called  \emph{geometric integration theory} 
\cite{BuddIserles1999,BuddPiggot2003,KrantzParks2008}.
We believe that the method presented in this paper can be helpful in the
framework of the geometric integration theory.
Indeed, the algorithmic test presented in this paper can be used 
to isolate Lagrangian difference equations out of families of difference equations
with the same continuum limit.

The case $k=1$ seems to be more involved and less algorithmic
due to the result of Lemma \ref{lem:noannihil}.
It is possible that this reflects the well-known fact that
second-order differential equations 
\begin{equation}
    u\lagrangeprime{2} = F\left( t,u,u\lagrangeprime{1} \right),
    \label{eq:secondeq}
\end{equation}
admit infinitely many Lagrangians.
These Lagrangians $L=L(t,u,u\lagrangeprime{1})$ are generated 
by the Jacobi Last Multiplier solving the trivial 
partial differential equation:
\begin{equation}
    \pdv{L}{*{2}{{u\lagrangeprime{1}}}} = M
    \label{eq:jacobilagr}
\end{equation}
where $M$ is a Jacobi Last Multiplier of the second-order
differential equation \eqref{eq:secondeq} 
\cite{Jacobi44a,Jacobi45,Jacobi1846,nuctam_1lag,nuctam_2lag,nuctam_3lag}.
In \cite{LeviRodriguez2012} the Jacobi Last Multiplier was defined
for discrete equation, but no relationship with the discrete
Lagrangian was given.
Further investigations to establish the existence of a formula
analogous to \eqref{eq:jacobilagr} are needed.

Let us now comments the relationship of our solution of the
inverse problem of the discrete calculus of variation with
the one given in \cite{HydonMansfield2004}.
As we noticed in the Introduction in \cite{HydonMansfield2004} an 
analogous of the homotopy formula for discrete equations was introduced.
The homotopy formula presented in \cite{HydonMansfield2004} can be used 
to construct a discrete Lagrangian for scalar difference equations \eqref{eq:recgen}
for all $k\in\N$ as well as for systems of difference equations, 
and even in the case of partial difference equations.
However, there are some restrictions on the applicability of the 
homotopy formula presented in \cite{HydonMansfield2004}.

To keep the discussion simple we give a comparison between 
the applicability of the method presented in \cite{HydonMansfield2004} 
and our method.
%However, we note that the core part of the method presented
%in \cite{HydonMansfield2004} is the concept of \emph{self-adjointness}.
First of all, let us state the results of \cite{HydonMansfield2004} in
our setting.
Let $P=P[x_{n}]$ be a difference function, i.e. a function depending
on $x_{n}$ and its shifts up to some order $r$.
We denote the space of such functions by $\mathcal{A}^{r}$.
The \emph{Fr\'echet derivative} of $P$ is
an operator $\vec{D}_{P}\colon \mathcal{A}^{q}\to\mathcal{A}^{r}$
defined by:
\begin{equation}
    \vec{D}_{P}\left( Q \right)
    =
    \lim_{\varepsilon\to0}\frac{P\left[ x_{n}+\varepsilon Q\left[ x_{n} \right] \right]-P\left[ x_{n} \right]}{\varepsilon}
    \label{eq:frechet}
\end{equation}
where $Q=Q\left[ x_{n} \right]$ is an arbitrary element of $\mathcal{A}^{q}$.
Following \cite{HydonMansfield2004} we have that  if the operator $\vec{D}_{P}$ 
is self-adjoint with respect to the $\ell^{2}\left( \Z \right)$ norm
then the difference equation of order $r$ defined by
\begin{equation}
    P\left[ x_{n} \right]=0, \quad P\in\mathcal{A}^{r}
    \label{eq:Peq}
\end{equation}
is Lagrangian.
To prove that the Fr\'echet derivative defined by a given
difference equation $P\in\mathcal{A}^{r}$
is or not self-adjoint is an algorithmic task which can be always accomplished
with a finite number of steps.
However, a negative answer to this procedure is not definitive.
Indeed, even though the operator $\vec{D}_{P}$ is not self-adjoint
a difference function $\mu\in\mathcal{A}^{r-1}$ such that 
$\vec{D}_{\mu \cdot P}$ is self-adjoint might exist.
Now since $\mu\left[ x_{n} \right]\in\mathcal{A}^{r-1}$ we have
that the difference equation defined by:
\begin{equation}
    \mu\left[ x_{n} \right] P\left[ x_{n} \right]=0,
    \label{eq:multiplier}
\end{equation}
is equivalent to equation \eqref{eq:Peq}.
The difference function $\mu\left[ x_{n} \right]$ is called a 
\emph{multiplier}.

Therefore, \emph{to disprove the existence of a Lagrangian for the difference
equation \eqref{eq:Peq} we must be able to prove that
the operator $\vec{D}_{\mu\cdot P}$ is not self-adjoint \emph{for every}
multiplier $\mu\left[ x_{n} \right]$}.
%to be sure that such difference equation is Lagrangian or not
%one should be able to prove that the difference equation 
%is or not self-adjoint up to a \emph{multiplier}:

On the other hand, we notice that with our method, multipliers are 
unessential.
Indeed, the annihilation operators \eqref{eq:annihil} and \eqref{eq:annihil2}
are uniquely defined by the solutions of the difference equations
\eqref{eq:Peq} with respect to the highest or the lowest shift of $x_{n}$,
namely equations \eqref{eq:recgen} and \eqref{eq:recgeninv}.
Therefore, the conditions stated in Theorem \ref{thm:elexist} are 
independent of the value of any possible multiplier $\mu\in\mathcal{A}^{r-1}$,
since equations \eqref{eq:recgen} and \eqref{eq:recgeninv}
are unchanged upon multiplication.
So, the outcome of the method presented in Section \ref{sec:method}
is definitive up to point transformations in the variable $x_{n}$.
%To apply Theorem \ref{thm:elexist} we need to solve the equation
%$P\left[ x_{n} \right]=0$ with respect to the highest shift, so
%multipliers are completely unessential.
%In this sense the answer to the problem given by the application
%of the condition in Theorem \ref{thm:elexist} is definitive,
%and valid up to point transformation in the discrete variable $x_{n}$.

Moreover, in an upcoming paper we will consider the problem of 
classification of variational difference equations of a given order
with some mild additional
assumption on the functional form of the difference equation 
(e.g. reversibility).
In that contest the conditions of Theorem \ref{thm:elexist} will
act as constraint on the functional form of the function $F$ in
\eqref{eq:recgen}.
We remark that this problem is non-trivial since using the 
systems \eqref{eq:lagrcondsys} and \eqref{eq:lagrcondsysinv} as a 
classification tool requires to derive and solve systems of
nonlinear partial difference equations.

Finally, work is in progress to extend the method of the 
annihilation operators to more general discrete variational problems, 
like systems of difference equations.

\section*{Acknowledgements}

We thank Prof. Nalini Joshi, Prof. Maria Clara Nucci and Dr. Dinh Tran 
for the interesting and fruitful discussion during the preparation of this paper.
Moreover, we thank Dr. Dinh T. Tran for pointing out the examples 
contained in Subsection \ref{sub:tranex}.
Finally, we express our gratitude to the anonymous referees whose comments 
and suggestions led to a great improvement of the paper.

GG is supported by the Australian Research Council through 
Nalini Joshi's Australian Laureate Fellowship grant FL120100094.

\appendix

\section{Conditions on the existence of a Lagrangian for equations \eqref{eq:ELTran}}
\label{app:lemmaproof}

To end the proof of Lemma \ref{lem:lagrnihil} we must show that the Lagrangians
\eqref{eq:LTran} arise from the conditions \eqref{eq:lagrcondsys}.
As the computations are analogous to those presented in Subsection
\ref{sub:tranex} we will limit ourselves to present the form  of the 
forward annihilation operators \eqref{eq:annihil} and of the corresponding
conditions on the existence of a Lagrangian \eqref{eq:lagrcondsys}.
The interested reader can check using the ansatz \eqref{eq:Lkdvsimp} that
these conditions yield the desired Lagrangians.

In Subsection \ref{sub:tranex} we already discussed the Lagrangian
\eqref{eq:Lkdv}, so we start from the Lagrangian \eqref{eq:Lpluskdv}
and its Euler--Lagrange equations \eqref{eq:ELpluskdv}.
In the case of equation \eqref{eq:ELpluskdv} the forward annihilation 
operators \eqref{eq:annihil} have the following expression:
\begin{subequations}
    \begin{align}
        \vec{A}_{m}^{+} &= \frac{1}{D^{2}_{\text{plusKdV}}}
        \pdv{}{x_{n-m}},
        \quad m\neq p,q,
        \label{eq:Amgenpluskdv}
        \\
        \vec{A}_{p}^{+} &= \frac{1}{D^{2}_{\text{plusKdV}}}
        \left[\pdv{}{x_{n-p}}- \left( x_{n}+x_{n-p-q} \right)^{2}\pdv{}{x_{n-p-q}}  \right],
        \label{eq:Amppluskdv}
        \\
        \vec{A}_{q}^{+} &= \frac{1}{D^{2}_{\text{plusKdV}}} 
        \left[\pdv{}{x_{n-p}}-\left( x_{n}+x_{n-p-q} \right)^{2}\pdv{}{x_{n-p-q}}  \right],
        \label{eq:Amqpluskdv}
    \end{align}
    \label{eq:annihiloppluskdv}
\end{subequations}
where:
\begin{equation}
    D_{\text{plusKdV}} = 1-\left( x_{n}+x_{n-p-q} \right)\left(x_{n+p}+x_{n-p}+x_{n+q}+x_{n-q}\right).
    \label{eq:Ddefplus}
\end{equation}
Using these operators we have that the differential
conditions on the Lagrangian \eqref{eq:lagrcondsys} are given by:
\begin{subequations}
    \begin{gather}
        \pdv{\ell}{x_{n-m},x_{n-p-q}} =0, \quad m\neq p,q,
        \label{eq:annpluskdvmgen}
        \\
        \begin{aligned}
        \left( x_{{n}}+x_{{n-q-p}} \right) ^{2}{\frac {\partial ^{2} \ell}{\partial x_{{n-q-p}}^{2}}}
        &-{\frac {\partial ^{2} \ell}{\partial x_{{n-q-p}}\partial x_{{n-p}}}}
        \\ 
        &+ 2\left( x_{{n-q-p}}+x_{{n}} \right) {\frac {\partial \ell}{\partial x_{{n-q-p}}}} =0,
        \end{aligned}
        \quad m=p,
        \label{eq:annpluskdvp}
        \\
        \begin{aligned}
        \left( x_{{n}}+x_{{n-q-p}} \right) ^{2}{\frac {\partial ^{2} \ell}{\partial x_{{n-q-p}}^{2}}}
        &-{\frac {\partial ^{2} \ell}{\partial x_{{n-q-p}}\partial x_{{n-q}}}}
        \\
        &+ 2\left( x_{{n-q-p}}+x_{{n}} \right) {\frac {\partial \ell}{\partial x_{{n-q-p}}}} =0,
        \end{aligned}
        \quad m=q.
        \label{eq:annpluskdvq}
    \end{gather}
    \label{eq:annpluskdv}
\end{subequations}
%It is easy to see that the function:
%\begin{equation}
%    \ell_{\text{plusKdV}} := \pdv{L_{\text{plusKdV}}}{x_{n}}\left( x_{n},\dots,x_{n-p-q} \right)
%    =x_{n-p}+x_{n-q}+\frac{1}{x_{n-q-p}+x_{n}}
%    \label{eq:ellpluskdv}
%\end{equation}
%satisfies the system \eqref{eq:annpluskdv}. 
The Lagrangian obtained in \cite{TranvanderKampQusipel2016} arises solving
the system \eqref{eq:annpluskdv}.
%agrees with
%the conditions posed in Theorem \ref{thm:elexist}.

In the case of equation \eqref{eq:ELlv} the forward annihilation 
operators \eqref{eq:annihil} have the following expression:
\begin{subequations}
    \begin{align}
        \vec{A}_{m}^{+} &= D_{\text{LV}}
        \pdv{}{x_{n-m}},
        \quad m\neq p,q,
        \label{eq:AmgenLV}
        \\
        \vec{A}_{p}^{+} &= D_{\text{LV}}
        \left[\pdv{}{x_{n-p}}+ \left(1+e^{x_{n}-x_{n-p-q}}\right)\pdv{}{x_{n-p-q}}  \right],
        \label{eq:AmpLV}
        \\
        \vec{A}_{q}^{+} &= D_{\text{LV}}
        \left[\pdv{}{x_{n-p}}-\left( 1+e^{x_{n}-x_{n-p-q}} \right)\pdv{}{x_{n-p-q}}  \right],
        \label{eq:AmqLV}
    \end{align}
    \label{eq:annihilopLV}
\end{subequations}
where:
\begin{equation}
    D_{\text{LV}} = \frac{e^{x_{{n+q}}+x_{{n-q}}-x_{{n+p}}-x_{{n-p}}+x_{{n-p-q}}-x_{{n}}}}{e^{x_{{n+q}}+x_{{n-q}}-x_{{n+p}}-x_{{n-p}}}\left( 1+e^{{+x_{{n-p-q}}-x_{{n}}}} \right)-1}.    
    \label{eq:Ddeflv}
\end{equation}
Using these operators we have that the differential
conditions on the Lagrangian \eqref{eq:lagrcondsys} are given by:
\begin{subequations}
    \begin{gather}
        \pdv{\ell}{x_{n-m},x_{n-p-q}} =0, \quad m\neq p,q,
        \label{eq:annLVmgen}
        \\
        \begin{aligned}
            \left(1+e^{x_{n}-x_{n-p-q}}\right){\frac {\partial ^{2} \ell}{\partial x_{{n-q-p}}^{2}}}
        &+{\frac {\partial ^{2} \ell}{\partial x_{{n-q-p}}\partial x_{{n-p}}}}
        \\ 
        &- e^{x_{n}-x_{n-p-q}} {\frac {\partial \ell}{\partial x_{{n-q-p}}}} =0,
        \end{aligned}
        \quad m=p,
        \label{eq:annLVp}
        \\
        \begin{aligned}
            \left(1+e^{x_{n}-x_{n-p-q}}\right){\frac {\partial ^{2} \ell}{\partial x_{{n-q-p}}^{2}}}
        &-{\frac {\partial ^{2} \ell}{\partial x_{{n-q-p}}\partial x_{{n-q}}}}
        \\
        &- e^{x_{n}-x_{n-p-q}} {\frac {\partial \ell}{\partial x_{{n-q-p}}}} =0,
        \end{aligned}
        \quad m=q.
        \label{eq:annLVq}
    \end{gather}
    \label{eq:annLV}
\end{subequations}
%It is easy to see that the function:
%\begin{equation}
%    \ell_{\text{LV}} := \pdv{L_{\text{LV}}}{x_{n}}\left( x_{n},\dots,x_{n-p-q} \right)
%    =-\log\left( 1+e^{x_{n-p-q}-x_{n}} \right)
%    \label{eq:ellLV}
%\end{equation}
%satisfies the system \eqref{eq:annLV}. 
%So the Lagrangian obtained in \cite{TranvanderKampQusipel2016} agrees with
%the conditions posed in Theorem \ref{thm:elexist}.
The Lagrangian obtained in \cite{TranvanderKampQusipel2016} arises solving
the system \eqref{eq:annLV}.

In the case of equation \eqref{eq:ELliouville} the forward annihilation 
operators \eqref{eq:annihil} have the following expression:
\begin{subequations}
    \begin{align}
        \vec{A}_{m}^{+} &= D_{\text{ASdL}}
        \pdv{}{x_{n-m}},
        \quad m\neq p,q,
        \label{eq:Amgenliouville}
        \\
        \vec{A}_{p}^{+} &= D_{\text{ASdL}}
        \left[\pdv{}{x_{n-p}}- \left(1+e^{x_{n}+x_{n-p-q}}\right)\pdv{}{x_{n-p-q}}  \right],
        \label{eq:Ampliouville}
        \\
        \vec{A}_{q}^{+} &= D_{\text{ASdL}}
        \left[\pdv{}{x_{n-p}}-\left( 1+e^{x_{n}+x_{n-p-q}} \right)\pdv{}{x_{n-p-q}}  \right],
        \label{eq:Amqliouville}
    \end{align}
    \label{eq:annihilopliouville}
\end{subequations}
where:
\begin{equation}
    D_{\text{ASdL}} = {\frac {e^{x_{n-p-q}+x_{n}-x_{n+p}-x_{n+q}-x_{n-q}-x_{n-p}}}{%
                \left( 1+{{  e}^{x_{{n-p-q}}+x_{{n}}}} \right)  
                \left( 1+{{  e}^{x_{{n-p-q}}+x_{{n}}}}-{{  e}^{-x_{{n+p}}-x_{{n+q}}-x_{{n-q}}-x_{{n-p}}}} \right) }}
    \label{eq:Ddefliouville}
\end{equation}
Using these operators we have that the differential
conditions on the Lagrangian \eqref{eq:lagrcondsys} are given by:
\begin{subequations}
    \begin{gather}
        \pdv{\ell}{x_{n-m},x_{n-p-q}} =0, \quad m\neq p,q,
        \label{eq:annliouvillemgen}
        \\
        \begin{aligned}
            \left(1+e^{x_{n}+x_{n-p-q}}\right){\frac {\partial ^{2} \ell}{\partial x_{{n-q-p}}^{2}}}
        &-e^{x_{n}+x_{n-p-q}}{\frac {\partial ^{2} \ell}{\partial x_{{n-q-p}}\partial x_{{n-p}}}}
        \\ 
        &-  {\frac {\partial \ell}{\partial x_{{n-q-p}}}} =0,
        \end{aligned}
        \quad m=p,
        \label{eq:annliouvillep}
        \\
        \begin{aligned}
            \left(1+e^{x_{n}+x_{n-p-q}}\right){\frac {\partial ^{2} \ell}{\partial x_{{n-q-p}}^{2}}}
            &-e^{x_{n}+x_{n-p-q}}{\frac {\partial ^{2} \ell}{\partial x_{{n-q-p}}\partial x_{{n-q}}}}
            \\
            &-{\frac {\partial \ell}{\partial x_{{n-q-p}}}} =0,
        \end{aligned}
        \quad m=q.
        \label{eq:annliouvilleq}
    \end{gather}
    \label{eq:annliouville}
\end{subequations}
%It is easy to see that the function:
%\begin{equation}
%    \ell_{\text{ASdL}} := \pdv{L_{\text{ASdL}}}{x_{n}}\left( x_{n},\dots,x_{n-p-q} \right)
%    =\log\left( 1+e^{x_{n}+x_{n-p-q}} \right)
%    \label{eq:ellliouville}
%\end{equation}
%satisfies the system \eqref{eq:annliouville}. 
%So the Lagrangian obtained in \cite{TranvanderKampQusipel2016} agrees with
%the conditions posed in Theorem \ref{thm:elexist}.
The Lagrangian obtained in \cite{TranvanderKampQusipel2016} arises solving
the system \eqref{eq:annliouville}.
This ends the proof of Lemma \ref{lem:lagrnihil}.

\bibliographystyle{plain}
\bibliography{bibliography}

\end{document}